\newtheorem{observation}{Observation}
\newenvironment{my_enumerate}{
\begin{enumerate}
  \setlength{\itemsep}{.8pt}
  \setlength{\parskip}{0pt}
  \setlength{\parsep}{0pt}}{\end{enumerate}
}
\begin{document}
\pagestyle{plain}

\title{Obtaining a Bipartite Graph by Contracting\\ Few Edges\thanks{This work is supported by the Research Council of Norway.}}

\author{Pinar Heggernes$^1$ \and Pim van 't Hof$^1$ \and Daniel Lokshtanov$^2$ \and Christophe Paul$^3$}

\institute{
$^1$ Department of Informatics, University of Bergen, Norway.\\
\texttt{\{pinar.heggernes,pim.vanthof\}@ii.uib.no}\\
$^2$ Dept. Computer Science and Engineering, University of California San Diego, USA.\\
\texttt{dlokshtanov@cs.ucsd.edu}\\
$^3$ CNRS, LIRMM, Universit{\'e} Montpellier 2, France.
\texttt{paul@lirmm.fr}
}

\date{}
\maketitle

\begin{abstract}
We initiate the study of the {\sc Bipartite Contraction} problem from the perspective of parameterized complexity. In this problem we are given a graph $G$ and an integer $k$, and the task is to determine whether we can obtain a bipartite graph from $G$ by a sequence of at most $k$ edge contractions. Our main result is an $f(k)\, n^{O(1)}$ time algorithm for {\sc Bipartite Contraction}. Despite a strong resemblance between {\sc Bipartite Contraction} and the classical {\sc Odd Cycle Transversal} (OCT) problem, the methods developed to tackle OCT do not seem to be directly applicable to {\sc Bipartite Contraction}. Our algorithm is based on a novel combination of the {\em irrelevant vertex} technique, introduced by Robertson and Seymour, and the concept of {\em important separators}. Both techniques have previously been used as key components of algorithms for fundamental problems in parameterized complexity. However, to the best of our knowledge, this is the first time the two techniques are applied in 
unison.
\end{abstract}

\section{Introduction}

{\sc Odd Cycle Transversal (OCT)} is a central problem in parameterized complexity. The establishment of its fixed parameter tractability by Reed, Smith, and Vetta~\cite{RSV04} in 2004, settling a long-standing open question~\cite{DF99}, supplied the field with the powerful new technique of iterative compression~\cite{NBook}. {\sc OCT} and the closely related {\sc Edge Bipartization} problem take a graph $G$ and an integer $k$ as input, and ask whether a bipartite graph can be obtained by deleting at most $k$ vertices, respectively $k$ edges, from $G$. These two problems can be viewed as two ways of measuring how close $G$ is to being bipartite. Over the last few years a considerable amount of research has been devoted to studying different measures of how close a graph is to being bipartite~\cite{GGHNW06,H09,KR-STOC10,KR-SODA10}, and how similarity to a bipartite graph can be exploited~\cite{DemaineHK10}.

A natural similarity measure is defined by the {\sc Bipartite Contraction} problem: Given a graph $G$ and an integer $k$, can we obtain a bipartite graph from $G$ by a sequence of at most $k$ edge contractions in $G$? The number of possible edge contractions in $G$ is always less than the number of vertices of $G$, since every edge contraction reduces the number of vertices by exactly one. In practical instances of bipartization problems the similarity parameter $k$ tends to be small~\cite{RBIL02}, which makes these problems especially well-suited for parameterized algorithms. A graph problem with input $G$ and $k$ is {\it fixed parameter tractable (FPT)} if there is an algorithm with running time $f(k) \, n^{O(1)}$, where the function $f$ depends only on $k$ and not on the size of $G$. Considering the significant amount of interest the problems OCT and {\sc Edge Bipartization} have received, we find it surprising that {\sc Bipartite Contraction} has not yet been studied.

In this paper we show that {\sc Bipartite Contraction} is fixed parameter tract\-able when parameterized by the number $k$ of edges to be contracted. The key ingredients of our algorithm fundamentally differ from the ones used in the above-mentioned algorithms for OCT and {\sc Edge Bipartization}. In the algorithm for {\sc OCT} by Reed, Smith, and Vetta \cite{RSV04}, iterative compression is combined with maximum flow arguments. The recent nearly linear time algorithm for the two problems, due to Kawarabayashi and Reed \cite{KR-SODA10}, uses the notion of odd minors, together with deep structural results of Robertson and Seymour~\cite{RS-GMXIII} about graphs of large treewidth without large clique minors. Interestingly, {\sc Bipartite Contraction} does not seem to be amenable to these approaches.

Although our algorithm is based on iterative compression, it seems difficult to adapt the compression step from~\cite{RSV04} for OCT to work for {\sc Bipartite Contraction}. Instead, we perform the compression step using a variant of the {\em irrelevant vertex technique}, introduced by Robertson and Seymour~\cite{RS-GMXIII} (see also \cite{RS-GMXXII}). In particular, if the treewidth of the input graph is large, then we identify an irrelevant edge that can be deleted from the graph without affecting the outcome. The irrelevant vertex technique has played a key role in the solutions of several problems (see e.g.,~\cite{K-IPCO08,KR-STOC10,KW-STOC10}).

Our algorithm crucially deviates from previous work in the manner in which it finds the irrelevant edge. While previous work has relied on large minor models as obstructions to small treewidth, ours uses the fact that any graph of high treewidth contains a large $p$-{\em connected} set $X$~\cite{DGJT99}. A vertex set $X$ is $p$-connected if, for any two subsets $X_1$ and $X_2$ of $X$ with $|X_1|=|X_2|\leq p$, there are $|X_1|$ vertex-disjoint paths with one endpoint in $X_1$ and the other in $X_2$. Using $p$-connected sets in order to find irrelevant edges has several advantages. First, our algorithm avoids the huge parameter-dependence which seems to be an inadvertent side effect of applying the Robertson and Seymour's graph minors machinery. Second, our arguments are nearly self-contained, and rely only on results whose proofs are simple enough to be taught in a graduate class. Thus, even though our algorithm is not practical by any means, it is much closer to practicality than other algorithms based on the irrelevant vertex technique. It is an intriguing question whether some of the algorithms that currently use Robertson-Seymour machinery to find an irrelevant vertex can be modified in such a way, that they find an irrelevant vertex using $p$-connected sets instead.

Using $p$-connected sets in order to find an irrelevant vertex or edge is non-trivial, because $p$-connectivity is a more ``implicit'' notion than that of a large minor model. We overcome this difficulty by using {\em important sets}. Important sets and the closely related notion of {\em important separators} were introduced in \cite{marx-separation-full} to prove the fixed-parameter tractability of multiway cut problems. The basic idea is that in many problems where terminals need to be separated in some way, it is sufficient to consider separators that are ``as far as possible'' from one of the terminals. Important separators turned out to be a crucial component, in some cases implicitly, in the solutions of cardinal problems in parameterized complexity~\cite{chenll07,dblp:journals/jacm/chenllor08,DBLP:conf/icalp/RazgonO08,MR11}. To the best of our knowledge, this is the first time the irrelevant vertex technique and important sets (or separators) are used together. We believe that this combination will turn out to be a useful and powerful tool.

\section{Definitions and notation}

All graphs considered in this paper are finite, undirected, and simple, i.e., do not contain multiple edges or loops. Given a graph $G$, we denote its vertex set by $V(G)$ and its edge set by $E(G)$. We also use the ordered pair $(V(G),E(G))$ to represent $G$. We let $n= |V(G)|$ and $m=|E(G)|$. For two graphs $G_1=(V_1,E_1)$ and $G_2=(V_2,E_2)$, the {\em disjoint union} of $G_1$ and $G_2$ is the graph $G_1\cup G_2=(V_1\cup V_2,E_1\cup E_2)$. The {\em deletion} of an edge $e\in E(G)$ yields the graph $G-e=(V(G),E(G)\setminus e)$. For a set $X\subseteq V(G)$, we write $G[X]$ to denote the subgraph of $G$ {\em induced} by $X$. A graph is {\it connected} if there is a path between every pair of its vertices. The {\it connected components} of a graph are its maximal connected subgraphs. For any set $X\subseteq V(G)$, we write $\delta_G(X)$ to denote the set of edges in $G$ that have exactly one endpoint in $X$. We define $d_G(X)=|\delta(X)|$.

The \emph{contraction} of edge $xy$ in $G$ deletes vertices $x$ and $y$ from $G$, and replaces them by a new vertex, which is made adjacent to precisely those vertices that were adjacent to at least one of the vertices $x$ and $y$. The resulting graph is denoted $G/xy$. Every edge contraction reduces the number of vertices in the graph by exactly one. We point out that several edges might disappear as the result of a single edge contraction. For a set $S\subseteq E(G)$, we write $G/S$ to denote the graph obtained from $G$ by repeatedly contracting an edge from $S$ until no such edge remains. It follows from the definition of an edge contraction that in order to obtain the graph $G/S$ from $S$, it is necessary and sufficient to contract all the edges of some spanning forest of the graph $(V(G),S)$.

Let $H$ be a graph with $V(H)= \{h_1, h_2, \ldots, h_\ell\}$. A graph $G$ is $H$-{\em contractible} if $H$ can be obtained from $G$ by contracting edges. Saying that $G$ is $H$-contractible is equivalent to saying that $G$ has a so-called $H$-{\it witness structure} ${\cal W}$, which is a partition of $V(G)$ into {\it witness sets} $W(h_1), W(h_2), \ldots, W(h_\ell)$. The witness sets have the property that each of them induces a connected subgraph of $G$, and for every two $h_i,h_j\in V(H)$, there is an edge in $G$ between a vertex of $W(h_i)$ and a vertex of $W(h_j)$ if and only if $h_i$ and $h_j$ are adjacent in $H$. Let $G'=G[W(h_1)]\cup \cdots \cup G[W(h_\ell)]$ be the graph obtained from $G$ by removing all the edges of $G$, apart from the ones that have both endpoints in the same witness set. In order to contract $G$ to $H$, it is necessary and sufficient to contract all the edges of some spanning forest $F$ of $G'$. Note that $|E(F)|=\sum_{i=1}^{\ell} (|W(h_i)|-1)=|V(G)|-|V(H)|$.

A $2$-{\em coloring} of a graph $G$ is a function $\phi : V(G) \rightarrow \{1,2\}$. We point out that a 2-coloring of $G$ is merely an assignment of colors 1 and 2 to the vertices of $G$, and should therefore not be confused with a {\em proper} 2-coloring of $G$, which is a 2-coloring with the additional property that no two adjacent vertices receive the same color. An edge $uv$ is said to be {\em good} (with respect to $\phi$) if $\phi(u) \neq \phi(v)$, and $uv$ is called {\em bad} (with respect to $\phi$) otherwise. A {\em good component} of $\phi$ is the vertex set of a connected component of the graph $(V(G),E')$, where $E'\subseteq E$ is the set of all edges that are good with respect to $\phi$. Any 2-coloring $\phi$ of $G$ defines a partition of $V(G)$ into two sets $V_\phi^1$ and $V_\phi^2$, which are the sets of vertices of $G$ colored 1 and 2 by $\phi$, respectively. A set $X\subseteq V(G)$ is a {\em monochromatic component} of $\phi$ if $G[X]$ is a connected component of $G[V_\phi^1]$ or a connected component of $G[V_\phi^2]$. We write ${\cal M}_\phi$ to denote the set of all monochromatic components of $\phi$. The {\em cost} of a 2-coloring $\phi$ is defined as $\sum_{X \in {\cal M}_\phi} (|X|-1)$. Note that the cost of a 2-coloring $\phi$ of $G$ is 0 if and only if $\phi$ is a proper 2-coloring of $G$. 

Let $G$ be a graph. A {\em tree decomposition} of $G$ is a pair $(T, \mathcal{ X}=\{X_{t}\}_{t\in V(T)})$, where $T$ is a tree and $\mathcal{X}$ is a collection of subsets of $V(G)$, satisfying the following three properties: (1) $\cup_{u\in V(T)}X_u=V$; (2) $\forall {uv\in E(G)}, \exists {t\in V(T)} : \{u,v\}\subseteq X_{t}$; and (3) $\forall {v\in V(G)}, \ T[\{t\mid v\in X_{t}\}]$ is connected. The {\em width} of a tree decomposition is $\max_{t\in V(T)} |X_t|-1$ and the {\em treewidth} of $G$, denoted $tw(G)$,
is the minimum width over all tree decompositions of $G$. 
 
The syntax of {\em Monadic Second Order Logic} (MSO) of graphs includes the logical connectives $\vee$, $\land$, $\neg$, variables for vertices, edges, sets of vertices and sets of edges, the quantifiers $\forall$, $\exists$ that can be applied to these variables, and the following five binary relations:
\begin{my_enumerate}
\item $u\in U$, where $u$ is a vertex variable and $U$ is a vertex set variable; 
\item $d \in D$, where $d$ is an edge variable and $D$ is an edge set variable;
\item $\mathbf{inc}(d,u)$, where $d$ is an edge variable, $u$ is a vertex variable, and the interpretation is that the edge $d$ is incident to the vertex $u$; 
\item $\mathbf{adj}(u,v)$, where  $u$ and $v$ are vertex variables and the interpretation is that $u$ and $v$ are adjacent;
\item equality of variables representing vertices, edges, sets of vertices and sets of edges.
\end{my_enumerate}

\section{{\sc Bipartite Contraction} and the cost of $2$-colorings}

In the {\sc Bipartite Contraction} problem we are given a graph $G$ and an integer $k$, and the task is to determine whether there exists a set $S \subseteq E(G)$ of at most $k$ edges such that $G/S$ is bipartite. The following lemma allows us to reformulate this problem in terms of 2-colorings of the graph $G$.

\begin{lemma}
\label{lem:contrcolor}
A graph $G$ has a 2-coloring $\phi$ of cost at most $k$ if and only if there exists a set $S\subseteq E(G)$ of at most $k$ edges such that $G/S$ is bipartite.
\end{lemma}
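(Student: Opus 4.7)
The plan is to establish both directions by explicitly constructing the object (the contraction set, respectively the 2-coloring) on one side from the object on the other, using spanning trees/forests of the relevant connected subgraphs as the bridge.

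For the forward direction, I would start with a 2-coloring $\phi$ of cost at most $k$. Let $\mathcal{M}_\phi = \{X_1, \dots, X_\ell\}$ be the monochromatic components, and for each $i$ choose a spanning tree $T_i$ of the connected graph $G[X_i]$. Set $S = \bigcup_i E(T_i)$. Then $|S| = \sum_i (|X_i|-1) = $ cost$(\phi) \le k$. In $G/S$ each $X_i$ collapses to a single vertex $x_i$; assign $x_i$ the color $\phi$ gave to $X_i$. The key observation is that any edge of $G$ joining two different monochromatic components must join components of different colors, since any edge between two same-colored monochromatic components would merge them into one connected component of $G[V_\phi^j]$, contradicting their being distinct elements of $\mathcal{M}_\phi$. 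Every remaining (non-contracted) edge of $G$ therefore becomes an edge of $G/S$ whose endpoints receive different colors, so $G/S$ is bipartite.

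For the converse, suppose $S \subseteq E(G)$ with $|S| \le k$ is such that $G/S$ is bipartite. By the observation recalled in the excerpt, I may replace $S$ by a spanning forest $F$ of the subgraph $(V(G),S)$, since $G/F = G/S$ and $|F| \le |S| \le k$. Let $W_1,\dots,W_\ell$ be the connected components of $(V(G),F)$; these are precisely the witness sets of the contraction, and each $(W_i, F \cap E(G[W_i]))$ is a tree, so $|F| = \sum_i(|W_i|-1)$. Fix a proper 2-coloring $\psi$ of $G/S$ and define $\phi(v) = \psi(w_i)$ whenever $v \in W_i$, where $w_i$ is the vertex of $G/S$ corresponding to $W_i$. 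Each $W_i$ is monochromatic and induces a connected subgraph of $G$; and any edge of $G$ between $W_i$ and $W_j$ with $i \ne j$ survives as an edge $w_iw_j$ in $G/S$, which by properness of $\psi$ forces $\phi$ to assign different colors to $W_i$ and $W_j$. Hence the $W_i$'s are exactly the monochromatic components of $\phi$, and cost$(\phi) = \sum_i(|W_i|-1) = |F| \le k$.

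The only genuinely subtle point in either direction is the equality between the monochromatic components of the constructed coloring and the chosen blocks ($X_i$ or $W_i$). This needs the non-obvious-looking fact that two distinct same-colored blocks can have no edge between them in $G$; but in both directions it follows immediately from the definitions, in one direction from the definition of a monochromatic component as a maximal connected same-colored subgraph, and in the other direction from the properness of $\psi$ on $G/S$. Once this is noted, the cost equals $\sum(|\text{block}|-1)$, which matches the number of edges in the spanning forest, yielding the required inequality in both directions.
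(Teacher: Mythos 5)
Your proof is correct, and its overall skeleton matches the paper's: in both directions you build $S$ (respectively $\phi$) from a spanning forest of the relevant connected subgraphs, and the cost/size bound follows from counting forest edges. The reverse direction is essentially identical to the paper's, modulo your explicit replacement of $S$ by its spanning forest $F$, which the paper handles implicitly by just bounding the spanning-forest size by $|S|$.

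The forward direction is where you genuinely deviate. The paper argues \emph{iteratively}: it contracts one bad edge of $S$ at a time, shows the cost drops by exactly one per contraction, and concludes that the final coloring has cost $0$, hence is proper. This requires tracking an invariant through a sequence of contractions and implicitly uses that $S$ is a forest so that the remaining edges of $S$ stay distinct and stay bad. You instead argue \emph{statically}: you observe that the vertices of $G/S$ are in bijection with the monochromatic components of $\phi$, color each by the color of its component, and verify directly that every edge of $G/S$ is bichromatic because two distinct same-colored monochromatic components cannot be joined by an edge of $G$ (else they would merge as connected components of $G[V_\phi^j]$). Your version avoids the iteration and the cost-bookkeeping entirely, which is arguably cleaner; the paper's version has the minor advantage of reusing the ``cost drops by one per bad-edge contraction'' observation, which is in the same spirit as Observation~2 used later. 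Both are correct and roughly equal in length.
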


\begin{proof}
Suppose $G$ has a 2-coloring $\phi$ of cost at most $k$. We build an edge set $S$ as follows. For each monochromatic component $X \in {\cal M}_\phi$, find a spanning tree $T$ of $G[X]$ and add the $|X|-1$ edges of $T$ to $S$. The total number of edges in $S$ is exactly the cost of $\phi$, so $|S| \leq k$. It remains to argue that $G' = G/S$ is bipartite. Let $uv\in S$ be a bad edge in $G$, and let $\phi'$ be the 2-coloring of $G/uv$ that assigns color $\phi(u)=\phi(v)$ to the new vertex resulting from the contraction of $uv$, and that assigns color $\phi(w)$ to every vertex $w\in V(G)\setminus \{u,v\}$. Since we contracted a bad edge, the cost of $\phi'$ is 1 less than the cost of $\phi$. Repeating this for every edge in $S$ yields a 2-coloring $\phi''$ of $G'$ of cost $0$. This means that $\phi''$ is a proper 2-coloring  of $G'$, which implies that $G'$ is bipartite.

For the reverse direction, suppose there is a set $S\subseteq E(G)$ of at most $k$ edges such that $G' = G/S$ is bipartite. We define $G^*$ to be the graph with the same vertex set as $G$ and edge set $S$, i.e., $G^*=(V(G),S)$. Let ${\cal W}$ be a $G'$-witness structure of $G$. Observe that $W(y)$ induces a connected component of $G^*$ for every $y \in V(G')$. Let $\phi$ be a proper 2-coloring of $G'$. We construct a $2$-coloring $\phi'$ of $G$ as follows. For every $v \in V(G)$, we set $\phi'(v) = \phi(y)$, where $y$ is the vertex in $V(G')$ such that $v \in W(y)$. Since the monochromatic components of $\phi'$ are exactly the connected components of the graph $G^*$, and since $G^*$ contains exactly $S$ edges, the cost of $\phi'$ is at most $|S|\leq k$.
\qed
\end{proof}

An instance of the {\sc Cheap Coloring} problem consists of a graph $G$ and an integer $k$, and the task is to decide whether $G$ has a 2-coloring of cost at most $k$. Lemma~\ref{lem:contrcolor} shows that the problems {\sc Bipartite Contraction} and {\sc Cheap Coloring} are equivalent. 

The deletion of an edge can not increase the cost of a $2$-coloring, and can only decrease the cost of a 2-coloring by at most one. We state this as the following observation.

\begin{observation}
\label{obs:remedge}
Let $\phi$ be a 2-coloring of $G$ of cost $k$. For any edge $uv \in E(G)$, the cost of $\phi$ in $G-uv$ is $k$ or $k-1$. 
\end{observation}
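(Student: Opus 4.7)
The plan is a straightforward case analysis on whether the deleted edge is good or bad with respect to $\phi$, tracking how the set $\mathcal{M}_\phi$ of monochromatic components changes.

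First I would handle the easy case where $uv$ is good, i.e.\ $\phi(u)\neq\phi(v)$. Since $u$ and $v$ lie in different color classes, $uv$ does not belong to $G[V_\phi^1]$ or to $G[V_\phi^2]$. Therefore removing $uv$ leaves both $G[V_\phi^1]$ and $G[V_\phi^2]$ untouched, so the set $\mathcal{M}_\phi$ of monochromatic components is identical in $G$ and in $G-uv$. By definition the cost is unchanged, i.e.\ it stays equal to $k$.

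Next I would handle the bad case, $\phi(u)=\phi(v)$. Then $u$ and $v$ belong to the same monochromatic component $X\in\mathcal{M}_\phi$, and all other monochromatic components are disjoint from $\{u,v\}$ and are therefore unaffected by the deletion. So it suffices to analyse how $G[X]$ breaks. Since we remove only a single edge from the connected graph $G[X]$, the graph $G[X]-uv$ has either one or two connected components. If it has one, the family $\mathcal{M}_\phi$ is unchanged and the cost stays $k$. If it splits into two components of sizes $a$ and $b$ with $a+b=|X|$, then in $G-uv$ the single summand $|X|-1$ in the cost is replaced by $(a-1)+(b-1)=|X|-2$, so the cost decreases by exactly one and becomes $k-1$.

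There is no real obstacle here: the only thing to notice is that a single edge deletion cannot split a connected induced subgraph into more than two pieces, so the cost can drop by at most one, and combining the two cases gives the claimed dichotomy.
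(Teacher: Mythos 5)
Your proof is correct. The paper itself does not supply a proof of this observation; it is preceded only by the informal remark that ``the deletion of an edge can not increase the cost of a $2$-coloring, and can only decrease the cost of a 2-coloring by at most one,'' and the observation is then stated without argument. Your case analysis on whether $uv$ is good or bad, and within the bad case whether $uv$ is a bridge of the induced subgraph on its monochromatic component, is exactly the reasoning the authors are implicitly invoking, and the key point --- that deleting a single edge from a connected graph yields at most two components, so the summand $|X|-1$ can drop to at most $(a-1)+(b-1)=|X|-2$ --- is correctly identified and cleanly handled. In short, you have supplied a complete proof of a fact the paper treats as self-evident, and it is the natural one.
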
 

Observation~\ref{obs:remedge} allows us to use the well-known {\em iterative compression} technique of Reed, Smith and Vetta~\cite{RSV04} to reduce the {\sc Cheap Coloring} problem to the {\sc Cheaper Coloring} problem. The {\sc Cheaper Coloring} problem takes as input a graph $G$, an integer $k$, and a $2$-coloring $\phi$ of $G$ of cost $k+1$, and the task is to either find a $2$-coloring of $G$ of cost at most $k$, or to conclude that such a coloring does not exist.

\begin{lemma}
\label{lem:cheaptocheaper}
If there is an algorithm for {\sc Cheaper Coloring} that runs in time $f(k) \, n^c$, then there is an algorithm for {\sc Cheap Coloring} that runs in time $f(k) \, n^cm$.
\end{lemma}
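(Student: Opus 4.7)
The plan is a textbook application of iterative compression in the edge-incremental version, made possible by Observation~\ref{obs:remedge}. Order the edges of $G$ arbitrarily as $e_1,\ldots,e_m$, and for $0\le i\le m$ define $G_i=(V(G),\{e_1,\ldots,e_i\})$, so that $G_0$ is edgeless and $G_m=G$. I will process the edges one at a time, maintaining the invariant that we either have in hand a $2$-coloring $\phi_i$ of $G_i$ of cost at most $k$, or have already certified that $G$ admits no such coloring.

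The base case is immediate: in $G_0$ every monochromatic component is a single vertex, so the all-$1$ coloring has cost $0\le k$. For the inductive step, assume we have $\phi_i$ of cost at most $k$ on $G_i$, and consider $G_{i+1}=G_i\cup\{e_{i+1}\}$. Observation~\ref{obs:remedge}, applied to the deletion of $e_{i+1}$ from $G_{i+1}$, says that the cost of $\phi_i$ as a $2$-coloring of $G_{i+1}$ exceeds its cost on $G_i=G_{i+1}-e_{i+1}$ by at most $1$, hence is at most $k+1$. If it is already at most $k$, set $\phi_{i+1}=\phi_i$. Otherwise the cost is exactly $k+1$, and we feed the triple $(G_{i+1},k,\phi_i)$ to the hypothesised {\sc Cheaper Coloring} algorithm. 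If it returns a $2$-coloring of $G_{i+1}$ of cost at most $k$, take that as $\phi_{i+1}$; if instead it reports that no such coloring exists, we halt and answer \textbf{No} for the original instance.

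Halting is justified by monotonicity: $G_{i+1}$ is a subgraph of $G$, and by a repeated application of Observation~\ref{obs:remedge} the cost of any $2$-coloring of $G$ is at least its cost on $G_{i+1}$, so a cost-$\leq k$ coloring of $G$ would restrict to one of $G_{i+1}$, contradicting the output of the subroutine. After at most $m$ iterations we either obtain $\phi_m$, a $2$-coloring of $G=G_m$ of cost at most $k$, or correctly report that none exists. Each iteration performs $O(n+m)$ work to evaluate the new cost plus at most one call to the {\sc Cheaper Coloring} subroutine, so the total running time is $f(k)\,n^c\cdot m$ as claimed.

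There is essentially no obstacle here: the only point to verify carefully is that Observation~\ref{obs:remedge} indeed applies in the ``reverse'' direction used above, i.e., that adding rather than deleting an edge changes the cost by $0$ or $1$ and never decreases it; this is exactly what the observation says when one rewrites $G_{i+1}-e_{i+1}=G_i$. Everything else is the standard bookkeeping of iterative compression.
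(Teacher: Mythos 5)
Your proposal is correct and is essentially the same iterative-compression argument the paper gives: order the edges, maintain a cost-$\le k$ coloring of the edge-incremental graphs $G_i$, and invoke the {\sc Cheaper Coloring} subroutine whenever adding an edge pushes the cost to $k+1$, using Observation~\ref{obs:remedge} (in both directions) to justify the update and the termination. The only cosmetic differences are that you start from the edgeless graph $G_0$ rather than $G_1$ and you explicitly note the cost is exactly $k+1$ before calling the subroutine; the substance is identical.
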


\begin{proof}
Suppose there exists an algorithm for {\sc Cheaper Coloring} that runs in time $f(k) \, n^c$. Then we can solve an instance $(G,k)$ of {\sc Cheap Coloring} by iterating over the edges $e_1, e_2, \ldots e_m$ of $G$ as follows. For every $i\in \{1,\ldots,m\}$, we define $G_i$ to be the graph with vertex set $V(G)$ and edge set $E_i = \{e_j~:~j \leq i\}$. The graph $G_1$ has a 2-coloring $\phi_1$ of cost $0$, which is at most $k$. For the first $k$ iterations, we trivially maintain a 2-coloring of cost at most $k$. Now, in iteration $i$ of the algorithm, assume that we have a 2-coloring $\phi_i$ of cost at most $k$ in $G_i$. By Observation~\ref{obs:remedge}, the cost of $\phi_i$ in $G_{i+1}$ is at most $k+1$. If the cost of $\phi_i$ in $G_{i+1}$ is at most $k$, then we proceed to the $(i+1)$th iteration. Otherwise, we run the algorithm for {\sc Cheaper Coloring} with input $(G_{i+1},k,\phi_i)$. If the algorithm concludes that $G_{i+1}$ has no 2-coloring of cost at most $k$, then, by Observation~\ref{obs:remedge}, neither does $G$. If, on the other hand, the algorithm outputs a 2-coloring $\phi_{i+1}$ of $G_{i+1}$ of cost at most $k$, then we proceed to the $(i+1$)th iteration. Since we call the algorithm for {\sc Cheaper Coloring} at most $m$ times, each time with parameter $k$, the time bound follows.
\qed
\end{proof}

We have now almost reached the variant of the problem that will be the focus of attention in the remainder of this paper. For two disjoint vertex sets $T_1$ and $T_2$, a 2-coloring $\phi$ of $G$ is a $(T_1, T_2)$-{\em extension} if $\phi$ colors every vertex in $T_1$ with $1$ and every vertex in $T_2$ with $2$. In the {\sc Cheap Coloring Extension} problem we are given a {\em bipartite} graph $G$, two integers $k$ and $t$, and two disjoint vertex sets $T_1$ and $T_2$ such that $|T_1|+|T_2| \leq t$. The objective is to find a $(T_1, T_2)$-extension $\phi$ of cost at most $k$, or to conclude that such a 2-coloring does not exist. We will say that a $(T_1,T_2)$-extension $\phi$ is a {\em cheapest} $(T_1,T_2)$-extension if there is no $(T_1, T_2)$-extension $\phi'$ with strictly lower cost than $\phi$.

\begin{lemma}
\label{lem:redextension}
If there is an algorithm for {\sc Cheap Coloring Extension} that runs in time $f(k,t) \, n^c$, then there is an algorithm for {\sc Cheaper Coloring} that runs in time $4^{k+1} f(k,2k+2) \, n^c$. 
\end{lemma}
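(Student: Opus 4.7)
The plan is to use standard guess-the-overlap branching: for any hypothetical $2$-coloring $\phi^*$ of $G$ of cost at most $k$, a small set $T \subseteq V(G)$ determined by $\phi$ suffices to describe $\phi^*$ on a boundary that reduces the rest to an instance of Cheap Coloring Extension. Let $T$ be the union of the monochromatic components of $\phi$ of size at least $2$. Each such component $X$ satisfies $|X| \leq 2(|X|-1)$, so summing over all of them gives $|T| \leq 2 \sum_X (|X|-1) = 2(k+1) = 2k+2$, matching the parameter $t$ in the statement. Note further that every bad edge of $\phi$ has both endpoints in $T$, because every vertex of $V(G) \setminus T$ forms a singleton monochromatic component of $\phi$; hence the graph $G_0 = G - E_{\mathrm{bad}}$, where $E_{\mathrm{bad}}$ is the set of bad edges of $\phi$, is bipartite with bipartition $(V_\phi^1, V_\phi^2)$.

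The algorithm iterates over each of the $2^{|T|} \leq 4^{k+1}$ functions $\phi^*|_T : T \to \{1,2\}$, yielding a partition $(T_1, T_2)$ of $T$. For a fixed guess, let $E_{\mathrm{bad}}^*$ be the edges of $E_{\mathrm{bad}}$ with both endpoints in $T_1$ or both in $T_2$, and define $c^* = \sum_C (|C|-1)$, where $C$ ranges over the connected components of $(T, E_{\mathrm{bad}}^*)$. Construct $\tilde G$ from $G_0$ by identifying the vertices of each such $C$ into a single super-vertex; since every $C$ lies entirely within $V_\phi^1$ or within $V_\phi^2$, no identification crosses the bipartition, so $\tilde G$ remains bipartite. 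Let $\tilde T_1, \tilde T_2$ denote the images of $T_1, T_2$ in $\tilde G$; then $|\tilde T_1| + |\tilde T_2| \leq |T| \leq 2k+2$. Invoke the Cheap Coloring Extension subroutine on $(\tilde G, k - c^*, \tilde T_1, \tilde T_2)$, and return ``yes'' if some guess succeeds.

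The main technical obstacle is establishing the cost identity $\mathrm{cost}(\phi^*, G) = \mathrm{cost}(\tilde \phi^*, \tilde G) + c^*$, where $\tilde \phi^*$ is the coloring of $\tilde G$ projected from any $2$-coloring $\phi^*$ of $G$ that extends $(T_1, T_2)$. Using the identity $\mathrm{cost} = |V| - |\mathcal{M}|$, it suffices to show that the bad-edge subgraph of $\tilde \phi^*$ in $\tilde G$ is obtained from the bad-edge subgraph of $\phi^*$ in $G$ by contracting exactly the edges of $E_{\mathrm{bad}}^*$: the edges of $E_{\mathrm{bad}} \setminus E_{\mathrm{bad}}^*$ are good under $\phi^*$ by definition and hence absent from the bad-edge subgraph, while each edge of $E_{\mathrm{bad}}^*$ lives inside a single monochromatic component of $\phi^*$, so contracting it preserves the component count and decreases the vertex count by one. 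The reverse direction of the equivalence is obtained by lifting any extension $\tilde \phi^*$ of $(\tilde T_1, \tilde T_2)$ back to $G$ by assigning each pre-image of a super-vertex the common color of that super-vertex. With the identity in hand, the running time is immediate: at most $4^{k+1}$ guesses, each with $c^* \leq k$ (otherwise we skip the call), and each call takes $f(k - c^*, 2k+2) \cdot n^c \leq f(k, 2k+2) \cdot n^c$ time, for a total of $4^{k+1} f(k, 2k+2) \cdot n^c$ as claimed.
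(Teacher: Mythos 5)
Your proposal is correct and follows essentially the same approach as the paper: guess the bipartition of the $\leq 2k+2$ vertices touched by bad edges of $\phi$, merge same-side bad-edge clusters and drop cross-side bad edges to obtain a bipartite {\sc Cheap Coloring Extension} instance with budget reduced by the number of merges, and iterate over all $4^{k+1}$ guesses. The only differences are presentational (you delete all bad edges first and then identify, rather than contract-then-delete, and you verify the cost bookkeeping via $\mathrm{cost}=|V|-|\mathcal{M}|$ rather than a per-contraction decrement), but these produce the same reduced graph and the same cost identity.
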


\begin{proof}
Given an $f(k,t) \, n^c$ time algorithm for {\sc Cheap Coloring Extension}, we show how to solve an instance $(G,k,\phi)$ of {\sc Cheaper Coloring}. Let $S$ be the set of all bad edges in $G$ with respect to $\phi$, and let $X$ be the set of endpoints of the edges in $S$. Since $\phi$ has cost $k+1$, we have $|X| \leq 2k+2$. We create $4^{k+1}$ instances of {\sc Cheap Coloring Extension} as follows.

For every possible partition $X$ into two sets $X_1$ and $X_2$, we set $k'=k$ and $t=|X_1|+|X_2|$, and we build a graph $G(X_1,X_2)$ from $G$ in the following way. As long as there is an edge $uv \in S$ such that $u$ and $v$ are both in $X_1$ or both in $X_2$, contract the edge $uv$, put the new vertex resulting from the contraction into the set $X_i$ that $u$ and $v$ belonged to, and decrease $k'$ by 1. Since the cost of $\phi$ is at most $k+1$, we contract at most $k+1$ edges in this way, and hence $k' \ge -1$. When there are no such edges left, then we discard this partition of $X$ into $X_1$ and $X_2$ if $k'=-1$; otherwise, we continue to build an instance of {\sc Cheap Coloring Extension} as follows. Delete all edges $uv \in S$ with $u \in X_i$ and $v \in X_j$ such that $i\neq j$. Since $S$ contains all the edges of $G$ that are bad with respect to $\phi$, and each of the edges of $S$ is either contracted or deleted, the resulting graph $G(X_1,X_2)$ has no bad edges with respect to $\phi$ and is therefore bipartite. Thus we obtain an instance $(G(X_1,X_2),k',t,X_1,X_2)$ of {\sc Cheap Coloring Extension} with $k' \ge 0$.

We now show that $(G,k,\phi)$ is a yes-instance of {\sc Cheaper Coloring} if and only if there is a partition of $X$ into $X_1$ and $X_2$ such that $(G(X_1, X_2),k',t,X_1,X_2)$ with $k'\ge 0$ is a yes-instance of {\sc Cheap Coloring Extension}.

Suppose that $(G,k,\phi)$ is a yes-instance of {\sc Cheaper Coloring}. Then there exists a 2-coloring $\phi^*$ of $G$ of cost at most $k$. Let $X_1$ and $X_2$ be the vertices of $X$ that are colored 1 and 2 by $\phi^*$, respectively. Consider the set $S'\subseteq S$ of edges that were contracted in order to obtain $G(X_1,X_2)$ from $G$ in the way described earlier. Since every edge in $S'$ is bad with respect to $\phi^*$, the cost of $\phi^*$ decreased by 1 with every edge contraction. Hence, $\phi^*$ is an $(X_1,X_2)$-extension of $G(X_1,X_2)$ of cost $k'$. We conclude that $(G(X_1,X_2),k',t,X_1,X_2)$ is a yes-instance of {\sc Cheap Coloring Extension}.

For the reverse direction, suppose there is a partition of $X$ into $X_1$ and $X_2$ such that $(G(X_1,X_2),k',t,X_1,X_2)$ is a yes-instance of {\sc Cheap Coloring Extension} with $k'\geq 0$, i.e., the bipartite graph $G(X_1,X_2)$ has an $(X_1,X_2)$-extension $\psi$ of cost at most $k'$. Let $S'\subseteq S$ be the set of edges that were contracted in $G$ to create the instance $(G(X_1,X_2),k',t,X_1,X_2)$. Since $k'=k-|S'| \ge 0$, we have that $|S'| \le k$. We define a 2-coloring $\theta$ of $G$ by coloring both endpoints of every edge $uv$ in $S'$ with the color that $\psi$ assigned to the vertex resulting from the contraction of the edge $uv$, and coloring all other vertices in $G$ with the color they received from $\psi$. Clearly, the cost of $\theta$ is at most $k' +|S'| =k$, and therefore $(G,k,\phi)$ is a yes-instance of {\sc Cheaper Coloring}.

Since we need to run the $f(k,t) \, n^c$ time algorithm for {\sc Cheap Coloring Extension} at most $4^{k+1}$ times, with parameters $k'\leq k$ and $t = |X| \le 2k+2$ at each iteration, the time bound follows.
\qed
\end{proof}

The next section is devoted to showing that {\sc Cheap Coloring Extension} is fixed parameter tractable when parameterized by $k$ and $t$. The reason we want to work with the {\sc Cheap Coloring Extension} problem rather than with the {\sc Bipartite Contraction} problem directly is that, as we shall see in Section~\ref{sec:important}, {\sc Cheap Coloring Extension} is a ``cut'' problem, and is therefore amenable to techniques based on {\em important separators} \cite{marx-separation-full}.

\section{Solving {\sc Cheap Coloring Extension} in FPT time}

In this section, we present an algorithm for the {\sc Cheap Coloring Extension} problem. For the remainder of this section, let $(G,k,t,T_1,T_2)$ be a given instance of {\sc Cheap Coloring Extension}. Recall that $G$ is bipartite. The high level structure of our algorithm is as follows. If the treewidth of $G$ is bounded by a function of $k$ and $t$, then we can use standard dynamic programming techniques to solve the problem in time $f(k,t) \, n$. If, on the other hand, the treewidth of $G$ is large, then we can find a large set which is ``highly connected''. In this case we show how to find in $f(k,t) \, n^{O(1)}$ time an edge $e \in E(G)$ such that $G$ has a $(T_1,T_2)$-extension of cost at most $k$ if and only if $G - e$ does. We then re-run our algorithm on $G - e$. 

To make the distinction between the two cases in our algorithm more precise, we use the following notion, due to Diestel et al.~\cite{DGJT99}. A set $X\subseteq V(G)$ is $p$-{\em connected} in $G$ if $|X|\geq p$ and, for all subsets $X_1,X_2\subseteq X$ with $|X_1|=|X_2|\leq p$, there are $|X_1|$ vertex-disjoint paths in $G$ with one endpoint in $X_1$ and the other in $X_2$. Diestel et al.~\cite{DGJT99} prove the following statement in the proof of Proposition 3 (ii): if $h\geq p$ and $G$ contains no $p$-connected set of size $h$, then $G$ has treewidth $< h+p-1$. (In fact, they prove a stronger version of this statement using the notion of an {\em externally} $p$-connected set, but we do not need this stronger assertion for our purposes.) We define a set $X$ to be {\em well-connected} if it is $|X|/2$-connected. Using this definition, the result of Diestel et al.~ can be seen to imply the following.

\begin{theorem}[\cite{DGJT99}]
\label{thm:twlinked}
If $tw(G) > w$, then $G$ contains a well-connected set of size at least $2w/3$.
\end{theorem}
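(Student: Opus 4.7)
The plan is to derive the theorem as a direct corollary of the contrapositive of the Diestel et al.\ statement quoted just before the theorem, namely: if $h \geq p$ and $tw(G) \geq h+p-1$, then $G$ contains a $p$-connected set of size at least $h$. The whole task is essentially to choose $h$ and $p$ cleverly so that the resulting $p$-connected set gives a well-connected set of the required size.

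First I would set $p := \lfloor (w+2)/3 \rfloor$ and $h := 2p$. Then clearly $h \geq p$ and $h + p - 1 = 3p - 1 \leq w < tw(G)$, so Diestel et al.\ produces a $p$-connected set $Y \subseteq V(G)$ with $|Y| \geq h = 2p$. A short arithmetic check (splitting on $w \bmod 3$) confirms that $2p = 2\lfloor (w+2)/3 \rfloor \geq 2w/3$, which is the size lower bound demanded by the theorem.

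Second, I would pass from $Y$ to a subset $X \subseteq Y$ with $|X| = 2p$ exactly, and argue that $X$ inherits $p$-connectivity from $Y$. The key observation is that the definition of $p$-connectivity only constrains subsets of the set of size at most $p$: given any $X_1, X_2 \subseteq X$ with $|X_1|=|X_2| \leq p$, these lie inside $Y$, so the $p$-connectivity of $Y$ already delivers the required $|X_1|$ vertex-disjoint paths in $G$ (the paths themselves are allowed to pass through $V(G) \setminus X$, which is fine). Since $|X| = 2p \geq p$, the set $X$ is therefore $p$-connected, and because $p = |X|/2$, it is well-connected by definition.

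There is no genuine obstacle here; the work is bundled into the cited Diestel et al.\ result. The only things one has to be mildly careful about are (i) the parameter choice, which needs to simultaneously satisfy $h \geq p$, $3p - 1 \leq w$ and $2p \geq 2w/3$, and (ii) the little remark that subsets of a $p$-connected set that still have size at least $p$ remain $p$-connected, which is the step that lets us convert ``$p$-connected of size $\geq 2p$'' into the stronger-looking ``$|X|/2$-connected of size $2p$.''
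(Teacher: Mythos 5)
Your approach --- setting $p = \lfloor (w+2)/3 \rfloor$, $h = 2p$, invoking the contrapositive of the quoted Diestel et al.\ statement, and then trimming the resulting $p$-connected set to a subset of size exactly $2p$ --- is precisely the derivation the paper has in mind when it asserts that Theorem~\ref{thm:twlinked} ``can be seen to imply'' from Diestel et al.\ (the paper offers no separate proof). The subset-inheritance observation, that any $X \subseteq Y$ with $|X| \geq p$ remains $p$-connected because the witnessing paths live in $G$ rather than $G[Y]$, is both correct and genuinely needed, since the Diestel et al.\ result may hand you a $p$-connected set strictly larger than $2p$, which would not itself be $|Y|/2$-connected. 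One small arithmetic slip: the chain $h+p-1 = 3p-1 \leq w < tw(G)$ is not quite right, since $3p - 1 \leq w$ fails whenever $w \equiv 1 \pmod 3$ (e.g.\ $w = 4$ gives $p = 2$ and $3p-1 = 5 > 4$). What you actually need, and what does hold, is $3p \leq w+2 \leq tw(G) + 1$, i.e.\ $h + p - 1 = 3p - 1 \leq tw(G)$, using the strict inequality $tw(G) > w$ to get $tw(G) \geq w+1$. This is exactly the hypothesis required by the contrapositive, so the argument goes through once that step is restated.
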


The proof of Theorem \ref{thm:twlinked} is constructive. In fact, given $G$ and $w$, a tree decomposition of width at most $w$ or a well-connected set of size at least $2w/3$ can be computed in time $c^wn^{O(1)}$ for some constant $c$~\cite{DGJT99}. We use Theorem~\ref{thm:twlinked} to compute either a tree-decomposition of $G$ of width at most $3(4k^2)\, t \, 4^{4k^2} + 3$ or a well-connected set $Y$ of size at least $2(4k^2) \, t \,4^{4k^2}+2$. Section~\ref{sec:bounded} deals with the first case, whereas the second case is covered in Section~\ref{sec:important}.

\subsection{Small treewidth}
\label{sec:bounded}

Suppose our algorithm has found a tree-decomposition of $G$ of width at most $3(4k^2)\, t \, 4^{4k^2} + 3$. We will use the following celebrated theorem by Courcelle~\cite{Courcelle90} to solve the {\sc Cheap Coloring Extension} problem in this case.

\begin{theorem}[\cite{Courcelle90}]
\label{thm:courcelle}
There is an algorithm that tests whether an MSO formula $\psi$ holds on a graph $G$ of treewidth $w$, in time $f(|\psi|,w)\, n$.
\end{theorem}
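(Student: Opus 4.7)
The plan is to reduce MSO model-checking on a bounded-treewidth graph to the membership problem for a finite bottom-up tree automaton running on (a nicified version of) the tree decomposition. Since the treewidth hypothesis supplies us with $(T,\mathcal{X})$ of width at most $w$, the first step is to convert $(T,\mathcal{X})$ in linear time into a \emph{nice} tree decomposition of the same width, whose internal nodes are of four standard types: leaf, introduce-vertex, forget-vertex, and join. I would also refine this to an edge-introduce variant so that each edge of $G$ is introduced exactly once; this ensures the graph can be regarded as the value of a term built from a finite signature of $w$-boundaried graph operations (one per node type).

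Next, I would encode $G$ together with the decomposition as a labelled term $\tau(G)$ over the algebra of $w$-boundaried graphs in the sense of Courcelle--Bauderon. The key compositionality step is then to prove, by induction on the structure of $\psi$, that there exists a finite set $\mathrm{Types}(\psi,w)$ of ``MSO $\psi$-types'' of $w$-boundaried graphs such that (i) the type of a composite boundaried graph depends only on the types of its immediate constituents and the algebraic operation that combines them, and (ii) whether the (unboundaried) graph $G$ satisfies $\psi$ is determined by the type of $\tau(G)$. To handle quantifiers, the inductive invariant must be strengthened so that $\psi$-types record satisfaction under all possible partial assignments to free (element- and set-) variables restricted to the at-most-$w+1$ boundary vertices; the crucial closure properties (under conjunction, negation, and projection corresponding to existential quantification) are then routine by Feferman--Vaught-style arguments.

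Given compositionality, the MSO formula $\psi$ compiles into a deterministic bottom-up tree automaton $A_\psi$ whose state set is $\mathrm{Types}(\psi,w)$ and whose transition function is the composition rule established above. The algorithm simply runs $A_\psi$ on $\tau(G)$: process the nice tree decomposition in a single bottom-up sweep, maintaining at each node the $\psi$-type of the boundaried subgraph rooted there; accept iff the type computed at the root satisfies $\psi$. Each transition costs time depending only on $|\psi|$ and $w$, and there are $O(n)$ nodes, so the total running time is $f(|\psi|,w)\cdot n$ for a computable function $f$.

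The main obstacle I expect is the compositionality/finiteness claim for MSO $\psi$-types, and in particular the bookkeeping needed to push quantification over \emph{sets} of vertices and edges through the algebraic operations. Bounded width is essential here: without it the space of relevant partial set-assignments at the boundary blows up and the type set ceases to be finite. I would not attempt to control $f$ carefully (it is known to be non-elementary in $|\psi|$ in general), since only the linear dependence on $n$ matters for applying the theorem in the preceding section.
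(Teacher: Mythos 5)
The paper does not prove this statement; it is Courcelle's theorem, imported as a black box from \cite{Courcelle90}, and the surrounding text only uses it (via Lemma~\ref{lem:solvetwlinked}) after first constructing a tree decomposition of the required width. Your sketch is the standard proof of Courcelle's theorem --- nice (edge-introduce) tree decomposition, the algebra of $w$-boundaried graphs, Feferman--Vaught compositionality of MSO types, compilation into a finite bottom-up tree automaton, and a single linear sweep --- and it is correct in outline.

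Two small points worth flagging. First, the theorem as stated is about ``a graph $G$ of treewidth $w$,'' not a graph given together with a width-$w$ tree decomposition; your first step silently assumes $(T,\mathcal{X})$ is supplied. To match the statement exactly you need to first invoke Bodlaender's linear-time algorithm to compute such a decomposition in time $2^{O(w^3)}n$, which is absorbed into $f(|\psi|,w)\,n$. (In the paper's actual use the decomposition is provided, so this is harmless there.) Second, since the paper's MSO syntax quantifies over edges and edge sets (i.e., MSO$_2$), your insistence on the edge-introduce refinement, so that types track partial set-assignments to both boundary vertices and the currently-introduced edge, is exactly the right care; without it the type set would not correctly reflect edge-set quantification. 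With those noted, the argument is sound, and as you say, the non-elementary blow-up of $f$ is irrelevant to the linear dependence on $n$ that the paper needs.
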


We remark that Theorem~\ref{thm:courcelle} holds even when the input graph $G$ is supplemented by unary relations $\alpha_1, \ldots, \alpha_p$ on vertices and edges and the MSO formula $\psi$ is allowed to use these relations~\cite{Courcelle90}.

\begin{lemma}
\label{lem:solvetwlinked}
There is an algorithm that, given an instance $(G,k,t,T_1,T_2)$ of {\sc Cheap Coloring Extension} together with a tree-decomposition of $G$ of width $w$, solves the instance in time $f(k,t,w)\, n$.
\end{lemma}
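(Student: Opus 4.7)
The plan is to invoke Courcelle's theorem (Theorem~\ref{thm:courcelle}) with $T_1$ and $T_2$ supplied as unary vertex relations. This reduces the task to writing down an MSO formula $\psi_k$, of size depending only on $k$, that expresses ``$G$ admits a $(T_1,T_2)$-extension of cost at most $k$''. Applied to $G$ with the given tree decomposition of width $w$, Theorem~\ref{thm:courcelle} then evaluates $\psi_k$ in time $f(|\psi_k|,w)\cdot n$, which is of the required form $f(k,t,w)\cdot n$ (with trivial dependence on $t$).

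The main obstacle is that the cost of a $2$-coloring is defined as a sum over monochromatic components, a quantity that is not directly syntactic in MSO. I would bypass this by appealing to Lemma~\ref{lem:contrcolor}, which rephrases ``cost at most $k$'' as the existence of an edge set $S\subseteq E(G)$ with $|S|\le k$ such that $G/S$ is bipartite. Encoding the colour-$1$ class by a vertex-set variable $U\subseteq V(G)$ and $S$ by $k$ edge variables $e_1,\dots,e_k$ (allowing repetitions so that $|\{e_1,\dots,e_k\}|\le k$), the formula $\psi_k$ asserts: (a) $T_1\subseteq U$ and $T_2\cap U=\emptyset$; (b) each $e_i$ has both endpoints in $U$ or both outside $U$ (so $S$ consists of bad edges); and (c) every edge $uv\in E(G)$ with $u$ and $v$ on the same side of $U$ lies in a common connected component of $(V(G),\{e_1,\dots,e_k\})$. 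A short argument, modelled on the proof of Lemma~\ref{lem:contrcolor}, shows that $\psi_k$ holds if and only if the desired $(T_1,T_2)$-extension exists: in one direction, take $S$ to be a spanning forest of the monochromatic components; in the other, observe that (b) and (c) together force the monochromatic components of the $U$-coloring to coincide with the connected components of $(V(G),S)$, so their number is at least $n-|S|\ge n-k$.

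The one non-trivial MSO ingredient is the ``same component'' predicate used in (c). It is expressible in the standard way: $u$ and $v$ lie in the same component of $(V(G),\{e_1,\dots,e_k\})$ if and only if every vertex set $W$ containing $u$ and closed under the edges $e_1,\dots,e_k$ also contains $v$, where closedness means that for each $i$, either both endpoints of $e_i$ are in $W$ or neither is. Closedness is a conjunction of $k$ constant-size formulas involving $\mathbf{inc}$, so the whole formula $\psi_k$ has size $O(k)$. Plugging $\psi_k$ and the supplied tree decomposition into Theorem~\ref{thm:courcelle} yields the claimed $f(k,t,w)\cdot n$ running time, completing the proof.
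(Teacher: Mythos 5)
Your proposal is correct and takes essentially the same approach as the paper: invoke Courcelle's theorem with $T_1,T_2$ as unary relations, guess the colour class and $k$ edges of a spanning forest of the monochromatic components via MSO quantification, and use the MSO-expressible ``same connected component'' predicate to enforce that every bad edge is spanned. The paper's formula uses two colour-set variables $S_1,S_2$ and phrases the connectivity condition via separators rather than closure, but these are cosmetic variants of the same construction; your explicit verification of the reverse direction (that the monochromatic components coincide with the components of $(V(G),S)$, bounding the cost by $|S|\le k$) makes the correctness argument, which the paper only sketches, slightly more complete.
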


\begin{proof}
The {\sc Cheap Coloring Extension} problem can be formulated in MSO. To see this, we define two unary relations on the vertex set of $G$: $\alpha_1(v)$ is true if $v \in T_1$ and $\alpha_2(v)$ is true if $v \in T_2$. We claim that the following formula holds on $G$ if and only if $G$ has a $(T_1,T_2)$-extension of cost at most $k$.
\begin{eqnarray*}
\psi & = & \exists S_1 \subseteq V(G),S_2 \subseteq V(G) ,C \subseteq E(G),e_1 \in E(G),e_2 \in E(G),\ldots,e_k \in E(G):\\
& &C=\{e_1,e_2,\ldots e_k\} \wedge \forall v \in V(G), \alpha_1(v) \rightarrow v \in S_1 \wedge \alpha_2(v) \rightarrow v \in S_2\\
& &\wedge~ \forall uv \in E(G)~\big{(}(u \in S_1 \wedge v \in S_2) \vee (u \in S_2 \wedge v \in S_1)\\
& &\vee~ (\forall X_1 \subseteq V(G),u\notin X_1 \vee v\in X_1 \vee \exists xy \in C~ (x\in X_1 \wedge y \notin X_1))\big{)}
\end{eqnarray*}
The interpretation of $\psi$ is as follows. The sets $S_1$ and $S_2$ are the vertices of $G$ colored $1$ and $2$, respectively, by a $(T_1,T_2)$-extension $\phi$. The edge set $C$ contains all the edges, $k$ in total, of a spanning tree of each monochromatic component of $\phi$. Every edge $uv$ of $G$ is either good (third line of the formula), or bad (fourth line). If $uv$ is a bad edge and $uv$ belongs to $C$, then we can take $xy=uv$. Suppose $uv$ is a bad edge and $uv\notin C$. Let $X$ be the monochromatic component of $\phi$ such that $G[X]$ contains $uv$, and let $X_1$ be any subset of $V(G)$ containing $u$ but not $v$. Since $C$ contains all the edges of a spanning tree of $G[X]$, there exists a path $P$ in $G[X]$ from $u$ to $v$, using only edges of $C$. Hence we can take $xy$ to be the first edge of $P$ that has one endpoint ($x$) in $X_1$ and the other endpoint ($y$) in $V(G)\setminus X_1$.

Note that, for simplicity, we took some liberties in the formulation of the MSO formula. For example, $C=\{e_1,e_2,\ldots e_k\}$ is not really an MSO formulation, but it can easily be translated into MSO by demanding that every edge $e_i \in C$ and that any edge in $C$ must be one out of $e_1,e_2,\ldots e_k$. Similarly, operators such as $\rightarrow$ can be reformulated using the $\wedge$, $\vee$, and $\neg$ operators. Applying Theorem~\ref{thm:courcelle} to this formulation completes the proof of the lemma.
\qed
\end{proof}

We would like to remark that, given an instance $(G,k,t,T_1,T_2)$ of {\sc Cheap Coloring Extension} together with a tree-decomposition of $G$ of width $w$, it is possible to solve that instance in time $(w+1)^{O(w)} n$, using standard dynamic programming techniques. This gives a much faster algorithm than the one obtained by applying Theorem~\ref{thm:courcelle} on the MSO formula. However, it would take several pages to give the details of such an algorithm, and for the main purpose of this paper, we find it sufficient to handle the case of small treewidth by Lemma~\ref{lem:solvetwlinked}.

\subsection{Large treewidth and irrelevant edges}
\label{sec:important}

Suppose our algorithm did not find a tree-decomposition of $G$ of small width, but instead found a well-connected set $Y$ of size at least $2(4k^2) \, t \,4^{4k^2}+2$. We use $Y$ throughout this section to refer to this specific set. An edge $e \in E(G)$ is said to be {\em irrelevant} if it satisfies the following property: $G$ has a $(T_1,T_2)$-extension of cost at most $k$ if and only if $G-e$ does. We will show that the presence of the large well-connected set $Y$ guarantees the presence of an irrelevant edge $e$ in $G$. Hence we find such an irrelevant edge $e$ in $G$, delete it from the graph, and solve {\sc Cheap Coloring Extension} on the instance $(G-e,k,t,T_1,T_2)$. Since each iteration of this process deletes an edge, we will find a tree-decomposition of the graph under consideration of small width after at most $m$ iterations, in which case we solve the problem as described in Section~\ref{sec:bounded}. 

Observation~\ref{obs:irredge} below gives a hint about how we are going to identify an irrelevant edge of $G$. We first need a basic observation about bad edges.

\begin{observation}
\label{obs:goodbad}
Let $\phi$ be a 2-coloring of $G$. No bad edge has both endpoints in the same good component of $\phi$.
\end{observation}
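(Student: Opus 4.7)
The plan is a short contradiction argument that crucially uses the standing hypothesis of Section~\ref{sec:important} that $G$ is bipartite. Suppose, for contradiction, that there is a bad edge $uv \in E(G)$ such that $u$ and $v$ both lie in the same good component of $\phi$. By the definition of a good component, $u$ and $v$ are connected in the subgraph $(V(G), E')$ whose edges are exactly the good edges with respect to $\phi$, so there is a path $P = x_0 x_1 \dots x_\ell$ in $G$ (with $x_0 = u$, $x_\ell = v$, $\ell \geq 1$) every edge of which is good.

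From there I would make two simple observations in sequence. First, along $P$ the colours alternate: since every edge $x_i x_{i+1}$ is good, we have $\phi(x_{i+1}) \neq \phi(x_i)$ for $0 \leq i < \ell$, and hence $\phi(u) = \phi(v)$ iff $\ell$ is even. Because $uv$ is bad, $\phi(u) = \phi(v)$, forcing $\ell$ to be even. Second, appending the edge $vu$ to $P$ yields a closed walk in $G$ of odd length $\ell + 1$. Invoking the standard elementary fact that any graph containing a closed walk of odd length must contain an odd cycle (extracted by induction on the walk length, peeling off a repeated internal vertex), one concludes that $G$ contains an odd cycle, contradicting the bipartiteness of $G$.

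The main thing I would flag is that the argument really does require $G$ to be bipartite: without this hypothesis the claim would fail, as a triangle coloured $1,2,1$ has its unique bad edge sitting entirely inside the single good component $\{x_0, x_1, x_2\}$. So the only subtle step is recognising that the statement is implicitly placed in the bipartite setting of Section~\ref{sec:important}, where the odd-cycle contradiction becomes available; once that hypothesis is in hand, the rest is immediate.
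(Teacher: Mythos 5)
Your proof is correct and follows essentially the same route as the paper's: both take the path $P$ of good edges joining $u$ to $v$ inside the good component, observe that colours alternate so $P$ has even length, and conclude that $P$ together with $uv$ forms an odd cycle, contradicting bipartiteness. Your detour through closed odd walks is harmless but unnecessary, since $P$ is already a path (no repeated vertices), so $P$ plus the edge $uv$ is directly an odd cycle; and you are right that the bipartiteness of $G$ is the essential standing hypothesis, exactly as the paper uses.
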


\begin{proof}
Suppose, for contradiction, that $G$ has a bad edge $uv$ such that both $u$ and $v$ belong to a good component $C$ of $\phi$. Since $uv$ is bad, we have $\phi(u)=\phi(v)$. Every good component is connected, so there is a path $P$ in $C$, starting in $u$ and ending in $v$, consisting only of good edges. The path $P$ must contain an even number of edges, implying that $P$ and $uv$ together form an odd cycle in $G$. This contradicts the assumption that $G$, which is part of the instance $(G,k,t,T_1,T_2)$ of {\sc Cheap Coloring Extension} that we are solving, is bipartite.
\qed
\end{proof}

\begin{observation}
\label{obs:irredge}
Let $uv\in E(G)$. If $\phi$ is a cheapest $(T_1,T_2)$-extension of $G - {uv}$ and $u$ and $v$ are in the same good component of $\phi$, then $uv$ is irrelevant.
\end{observation}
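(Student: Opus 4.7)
The plan is to verify both directions of the definition of ``irrelevant''. The direction ``$G$ has a $(T_1,T_2)$-extension of cost at most $k$ $\Rightarrow$ $G-uv$ does'' is immediate from Observation~\ref{obs:remedge}: restricting the same coloring to $G-uv$ can only lower the cost. So all the work lies in the reverse direction, where I want to show that if $G-uv$ admits a $(T_1,T_2)$-extension of cost at most $k$, then so does $G$. Since $\phi$ is cheapest, its cost in $G-uv$ is at most $k$, and it is the natural candidate: I will argue that $\phi$, viewed now as a coloring of $G$, is still a $(T_1,T_2)$-extension of $G$ of cost at most $k$.

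The crux is a single claim: $\phi(u) \neq \phi(v)$. I would prove this by contradiction. If $\phi(u)=\phi(v)$, then $uv$ is a bad edge of $\phi$ in $G$, so the set of good edges of $\phi$ is identical in $G$ and in $G-uv$. Consequently the good components of $\phi$ in the two graphs coincide, so the hypothesis that $u$ and $v$ lie in the same good component of $\phi$ (in $G-uv$) transfers to $G$. But $G$ is bipartite, and Observation~\ref{obs:goodbad} then forbids any bad edge of $\phi$ in $G$ from having both endpoints in the same good component -- contradicting the fact that $uv$ itself is such an edge.

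Once $\phi(u)\neq\phi(v)$ is established, the rest is essentially bookkeeping. The edge $uv$ is good with respect to $\phi$ in $G$, and monochromatic components are determined purely by edges whose endpoints share a color, so adding $uv$ back changes none of them. Hence the cost of $\phi$ in $G$ equals its cost in $G-uv$, which is at most $k$; and since $\phi$ colours $T_1$ with $1$ and $T_2$ with $2$, it is a valid $(T_1,T_2)$-extension of $G$. The only subtle point in the argument is the one captured above -- noticing that the ``same good component'' hypothesis, together with bipartiteness of $G$ and Observation~\ref{obs:goodbad}, forces $\phi(u)\neq\phi(v)$, after which reinstating $uv$ is cost-free. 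I do not foresee any further obstacle.
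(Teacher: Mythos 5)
Your proposal is correct and follows essentially the same route as the paper's proof: the key step in both is to deduce, via Observation~\ref{obs:goodbad} and the bipartiteness of $G$, that $uv$ must be good for $\phi$, so that reinstating $uv$ leaves the monochromatic components and hence the cost of $\phi$ unchanged. Your detour through contradiction to show $\phi(u)\neq\phi(v)$ is just the contrapositive of the one-line deduction in the paper, and your explicit two-direction check of irrelevance amounts to the paper's observation that the optimum costs in $G$ and $G-uv$ coincide.
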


\begin{proof}
Suppose $u$ and $v$ belong to the same good component of a cheapest $(T_1,T_2)$-extension $\phi$ of $G-uv$. Note that $\phi$ is a 2-coloring of $G$, and that the edge $uv$ in $G$ is good with respect to $\phi$ as a result of Observation~\ref{obs:goodbad}. Hence $\phi$ is a $(T_1,T_2)$-extension of $G$, and the cost of $\phi$ in $G$ equals the cost of $\phi$ in $G-uv$. As a result of Observation~\ref{obs:remedge}, $\phi$ must be cheapest $(T_1,T_2)$-extension of $G$. Since the cost of a cheapest $(T_1,T_2)$-extension of $G-uv$ equals the cost of a cheapest $(T_1,T_2)$-extension of $G$, the edge $uv$ is irrelevant by definition.
\qed
\end{proof}

In order to use Observation~\ref{obs:irredge}, we need to identify vertices that will end up in the same good component of some cheapest $(T_1,T_2)$-extension of $G - {uv}$. The vertices in $Y$ are good candidates, because they are so highly connected to each other. Over the next few lemmas we formalize this intuition. We start with two observations that will allow us, in the proof of Lemma~\ref{lem:boundisect} below, to bound the number of bad edges and the number of good components of a cheapest $(T_1,T_2)$-extension of $G$ of cost at most $k$.

\begin{observation}
\label{obs:badlimit}
Let $\phi$ be a $2$-coloring of $G$. If $\phi$ has cost at most $k$, then there are less than $2k^2$ bad edges.
\end{observation}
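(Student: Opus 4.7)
The plan is to bound bad edges by showing each one must be ``trapped'' inside a single monochromatic component, and then to count edges inside each component via a simple Tur\'an-style estimate using the cost constraint.

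First I would verify the localization claim: if $uv$ is bad then $\phi(u)=\phi(v)$, so $u$ and $v$ both lie in $V_\phi^1$ (or both in $V_\phi^2$), and because $uv$ itself is an edge they belong to the same connected component of $G[V_\phi^{\phi(u)}]$, i.e., to a single $X \in \mathcal{M}_\phi$. Moreover, since distinct monochromatic components are vertex-disjoint, no edge is counted twice when we sum over components. Hence the number of bad edges is at most $\sum_{X \in \mathcal{M}_\phi} \binom{|X|}{2}$.

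Next I would plug in the cost assumption. Writing $a_X := |X|-1 \geq 0$, the hypothesis says $\sum_{X} a_X \leq k$. In particular every individual $a_X \leq k$, so $|X| \leq k+1$ for each monochromatic component. Therefore
\begin{equation*}
\sum_{X \in \mathcal{M}_\phi} \binom{|X|}{2} \;=\; \sum_X \frac{(a_X+1)\,a_X}{2} \;\leq\; \frac{k+1}{2}\sum_X a_X \;\leq\; \frac{k(k+1)}{2},
\end{equation*}
which is strictly less than $2k^2$ (for the nontrivial case $k\geq 1$; when $k=0$ the cost forces $\phi$ to be proper, so there are no bad edges and the statement is vacuous).

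I do not anticipate any real obstacle: the argument is a one-line counting bound once the localization observation is made. The only subtlety worth flagging is the disjointness of the $G[X]$, which ensures we do not overcount, and the fact that the claimed bound $2k^2$ is intentionally loose — the tight bound $k(k+1)/2$ falls out for free, but $2k^2$ suffices for the downstream application in Lemma~\ref{lem:boundisect}.
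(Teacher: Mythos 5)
Your proof is correct and takes a genuinely different counting route. The paper forms the disjoint union $G'$ of all nontrivial monochromatic components, notes that the cost equals the number of edges in a spanning forest of $G'$, deduces that $|V(G')| \leq 2k$ (a forest with at most $k$ edges and no isolated vertices has at most $2k$ vertices), and then bounds all bad edges---which have both endpoints in $V(G')$---by $\binom{2k}{2} = 2k^2 - k$. You instead localize each bad edge to a single monochromatic component $X$, bound the edges inside $X$ by $\binom{|X|}{2}$, and sum over components, using $\sum_X(|X|-1)\leq k$ together with the consequence $|X|\leq k+1$ to obtain $\sum_X \binom{|X|}{2} \leq \frac{k(k+1)}{2}$. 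Both arguments start from the same localization of bad edges to monochromatic components, but your per-component bookkeeping is sharper: the paper's global $\binom{2k}{2}$ would be tight only if every vertex touching a bad edge sat inside a single clique, whereas exploiting the disjointness of the $G[X]$'s saves roughly a factor of four. Either bound comfortably suffices for Lemma~\ref{lem:boundisect}. One small quibble: at $k=0$ the claim literally asserts fewer than $0$ bad edges; your proof, like the paper's, shows there are exactly $0$, so the strict inequality is off by one in that degenerate case. It is not so much ``vacuous'' as a harmless slip inherited from the paper's own statement (the paper's $\binom{2k}{2} < 2k^2$ also fails at $k=0$), and it has no effect downstream.
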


\begin{proof}
Let ${\cal M}_\phi' = \{X \in{\cal M}_\phi~:~|X|\geq 2\}$ be the set of monochromatic components of $\phi$ containing more than one vertex, and let $G'$ be the disjoint union of the graphs induced in $G$ by the elements of ${\cal M}'_\phi$, i.e., $G'=\bigcup_{X\in {\cal M}'_\phi} G[X]$. By definition, the cost of $\phi$ is $\sum_{X\in {\cal M}_\phi} (|X|-1)=\sum_{X\in {\cal M}'_\phi} (|X|-1)$, which is exactly the number of edges in any spanning forest of $G'$. Since any forest on at most $k$ edges without isolated vertices has at most $2k$ vertices, we have $|V(G')|\leq 2k$. Every bad edge has both endpoints in $V(G')$, so the number of bad edges is at most ${2k\choose 2} < 2k^2$.
\qed
\end{proof}

\begin{observation}
\label{obs:containterm}
Let $\phi$ be a cheapest $(T_1,T_2)$-extension of $G$. Every good component of $\phi$ contains a vertex from $T_1 \cup T_2$.
\end{observation}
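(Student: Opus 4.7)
The plan is a contrapositive, color-flipping argument. Suppose $\phi$ is a $(T_1,T_2)$-extension with a good component $C$ satisfying $C \cap (T_1 \cup T_2) = \emptyset$; I will exhibit a $(T_1,T_2)$-extension $\phi'$ with strictly smaller cost, contradicting that $\phi$ is a cheapest $(T_1,T_2)$-extension. Define $\phi'$ to agree with $\phi$ outside $C$ and to swap the two colors on $C$; since $C$ has no terminal, $\phi'$ is again a $(T_1,T_2)$-extension.

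The first step is to classify the edges of $G$ relative to $C$. By Observation~\ref{obs:goodbad}, no bad edge has both endpoints inside the good component $C$, so every edge with both endpoints in $C$ is good; and every edge with exactly one endpoint in $C$ must be bad, since otherwise its outside endpoint would be joined to a vertex of $C$ by a good edge and would therefore lie in the same good component, contradicting the maximality of $C$. Under the flip, edges inside $C$ stay good (both endpoints swap), boundary edges of $C$ turn from bad to good (one endpoint swaps), and edges entirely outside $C$ are unchanged. Now write $A = V_\phi^1 \cap C$ and $B = V_\phi^2 \cap C$: since all internal edges of $C$ are good, $G[A]$ and $G[B]$ are edgeless, and since no good edge leaves $C$, there are no edges of $G$ between $A$ and $V_\phi^2\setminus C$ or between $B$ and $V_\phi^1\setminus C$. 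Hence $G[V_{\phi'}^1]$ is the disjoint union of $G[V_\phi^1\setminus C]$ and $|B|$ isolated vertices, and symmetrically for color~$2$, so
\[
|{\cal M}_{\phi'}| \;=\; \mathrm{cc}(G[V_\phi^1\setminus C]) + \mathrm{cc}(G[V_\phi^2\setminus C]) + |A| + |B|.
\]
Combining this with the elementary bound $\mathrm{cc}(H-S) \geq \mathrm{cc}(H) - |S|$ applied to $(G[V_\phi^1],A)$ and to $(G[V_\phi^2],B)$, together with the identity $\mathrm{cost}(\phi) = |V(G)| - |{\cal M}_\phi|$, yields $|{\cal M}_{\phi'}| \geq |{\cal M}_\phi|$, i.e.\ $\mathrm{cost}(\phi') \leq \mathrm{cost}(\phi)$.

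To upgrade to strict inequality, observe that $\mathrm{cc}(G[V_\phi^1\setminus C]) + |A| > \mathrm{cc}(G[V_\phi^1])$ holds exactly when some $a\in A$ is non-isolated in $G[V_\phi^1]$, which (as $G[A]$ is edgeless) is equivalent to $a$ having a bad color-$1$ neighbor outside $C$, and hence to $C$ having at least one boundary edge of color~$1$; a symmetric statement covers color~$2$. Thus as soon as $C$ sends any edge outside, the cost strictly decreases and cheapness is contradicted. The only genuine obstacle I foresee is the degenerate case where $C$ is itself an entire connected component of $G$ with no terminal: such a component contributes $0$ to every $(T_1,T_2)$-extension and is independent of the rest of $G$, so it can be peeled off by a trivial preprocessing step, after which the observation holds without qualification.
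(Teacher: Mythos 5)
Your proof is correct and follows the same color-flipping strategy the paper uses: flip $\phi$ on the terminal-free good component $C$ to get a $(T_1,T_2)$-extension $\phi'$ of strictly smaller cost. The paper's version is terser (it notes that, since $G$ is connected, some $v\in C$ meets a bad edge, and after the flip $\{v\}$ becomes a fresh singleton monochromatic component, whence $|{\cal M}_\phi|<|{\cal M}_{\phi'}|$); you instead do the full accounting via $\mathrm{cost}(\phi)=|V(G)|-|{\cal M}_\phi|$ and the bound $\mathrm{cc}(H-S)\geq\mathrm{cc}(H)-|S|$, which is a bit more explicit but amounts to the same argument. You also correctly flag the degenerate case where $C$ is an entire terminal-free connected component of $G$; the paper sidesteps this by implicitly assuming $G$ is connected (so a bad boundary edge of $C$ always exists), which is equivalent to your ``peel off the component'' preprocessing.
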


\begin{proof}
Suppose a good component $C$ of $\phi$ does not contain any vertex from $T_1 \cup T_2$. We build a coloring $\phi'$ from $\phi$ by changing the color of every vertex in $C$, leaving the color of every other vertex unchanged, i.e., $\phi'(v)=3-\phi(v)$ if $v\in C$, and $\phi'(v)=\phi(v)$ if $v\notin C$. Since $\phi(v)=\phi'(v)$ for every $v\in T_1 \cup T_2$, $\phi'$ is a $(T_1,T_2)$-extension of $G$. Furthermore, every edge that was good with respect to $\phi$ is good with respect to $\phi'$, while every edge in $\delta_G(C)$ was bad with respect to $\phi$ and is good with respect to $\phi'$. Since $G$ is connected, there is some vertex $v \in C$ which is incident to at least one edge that was bad with respect to $\phi$. On the other hand, all edges incident to $v$ are good with respect to $\phi'$. Hence $\{v\}$ is a monochromatic component of $\phi'$, but $\{v\}$ was not a monochromatic component of $\phi$. This means that $|{\cal M}_\phi| < |{\cal M}_{\phi'}|$. This, together with the observation that the number of edges that are bad with respect to $\phi'$ is not more than the number of edges that were bad with respect to $\phi$, implies that the cost of $\phi'$ is strictly less than the cost of $\phi$. This contradicts the assumption that $\phi$ is a cheapest $(T_1,T_2)$-extension of $G$.
\qed
\end{proof}

The next lemma shows that almost all the vertices of $Y$ appear in the same good component of any cheapest $(T_1,T_2)$-extension $\phi$ of $G$ of cost at most $k$. 

\begin{lemma}
\label{lem:boundisect}
Let $uv\in E(G)$, and let $\phi$ be a cheapest $(T_1,T_2)$-extension of $G - uv$ of cost at most $k$. There exists exactly one good component $C^*$ of $\phi$ satisfying $|Y \setminus C^*| \leq 2k^2$, and every other good component $C'$ of $\phi$ satisfies $|Y \cap C'| \leq 2k^2$.
\end{lemma}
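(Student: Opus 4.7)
The plan is to reduce the entire lemma to a single recurring argument, which I will call the \emph{core argument}, combining well-connectedness of $Y$ with Observation~\ref{obs:badlimit}. Suppose $A$ and $B$ are disjoint unions of good components of $\phi$ with both $|Y \cap A| > 2k^2$ and $|Y \cap B| > 2k^2$. Pick $X_1 \subseteq Y \cap A$ and $X_2 \subseteq Y \cap B$ with $|X_1| = |X_2| = 2k^2 + 1$. The bound $|Y| \geq 2(4k^2) \, t \, 4^{4k^2} + 2$ easily gives $|X_i| \leq |Y|/2$, so well-connectedness of $Y$ produces $2k^2+1$ vertex-disjoint paths in $G$ from $X_1$ to $X_2$. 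Since the endpoints of each such path lie in distinct good components of $\phi$, any path that avoids the edge $uv$ must traverse a bad edge of $G - uv$; moreover, at most one of the vertex-disjoint paths can use $uv$. Thus at least $2k^2$ pairwise distinct bad edges appear in $G - uv$, contradicting Observation~\ref{obs:badlimit}.

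With the core argument in hand, I would proceed in three steps. Call a good component $C$ \emph{heavy} if $|Y \cap C| > 2k^2$. First, applying the core argument to two heavy components (taken as singletons $A, B$) shows that at most one heavy good component can exist. Second, to show at least one heavy component exists, I would assume otherwise and greedily distribute the good components between two bins; because each good component contributes at most $2k^2$ elements of $Y$, the resulting partition $A \sqcup B$ of good components satisfies $\bigl||Y \cap A| - |Y \cap B|\bigr| \leq 2k^2$, and since $|Y|$ dwarfs $4k^2$, both $|Y \cap A|$ and $|Y \cap B|$ strictly exceed $2k^2$, contradicting the core argument. Hence a unique heavy good component $C^*$ exists, and the second conclusion of the lemma—that every other good component $C'$ satisfies $|Y \cap C'| \leq 2k^2$—is simply the non-heaviness of such $C'$.

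Third, to prove $|Y \setminus C^*| \leq 2k^2$, I would apply the core argument one last time with $A = C^*$ and $B$ equal to the union of all other good components: if $|Y \setminus C^*| > 2k^2$, then $|Y \cap B| > 2k^2$ and $|Y \cap A| > 2k^2$ (by heaviness of $C^*$), so the core argument immediately yields a contradiction.

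The main obstacle will be step two, the existence of the heavy component, because we have no a priori bound on the number of good components of $\phi$ in $G - uv$ (in particular, $G - uv$ need not be connected). The greedy bin-packing circumvents this difficulty by exploiting only the size cap of $2k^2$ per component together with the enormous size of $Y$. Once that step is through, steps one and three are routine reapplications of the same core argument.
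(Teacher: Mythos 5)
Your proof is correct and follows the same skeleton as the paper's (dichotomy via well-connectedness and the bad-edge bound, then existence, then uniqueness), but your existence argument is genuinely different. The paper proves that a suitable $C^*$ exists by invoking Observation~\ref{obs:containterm}: every good component of a cheapest extension contains a vertex of $T_1\cup T_2$, so there are at most $t$ good components, and hence at most $t\cdot 2k^2 < |Y|$ vertices of $Y$ could be covered if every component were light. You instead avoid any bound on the \emph{number} of good components by greedily balancing them into two bins $A\sqcup B$: since each component contributes at most $2k^2$ vertices of $Y$, the bins end up within $2k^2$ of each other, and the enormous size of $Y$ forces both to exceed $2k^2$, after which your core argument (with $A$ and $B$ as unions of components rather than single components) gives the contradiction. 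Your route is somewhat more self-contained, relying only on Observation~\ref{obs:badlimit} and well-connectedness; in particular, it sidesteps the fact that Observation~\ref{obs:containterm} is proved under the assumption that the ambient graph is connected, whereas the lemma applies it to $G-uv$, which need not be connected (the paper's bound can be patched to $t$ plus the number of components of $G-uv$, at the cost of a slightly larger constant in the size of $Y$, but your argument needs no such patch). The only thing left implicit in your write-up, which is worth making explicit, is the final bookkeeping: any $C$ with $|Y\setminus C|\leq 2k^2$ is automatically heavy because $|Y|-2k^2>2k^2$, so uniqueness of the heavy component also gives uniqueness of a component satisfying the first condition of the lemma.
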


\begin{proof}
Let $C$ be a good component of $\phi$. We first show that either $|Y \setminus C| \leq 2k^2$ or $|Y \cap C| \leq 2k^2$. Suppose for contradiction that $|Y \cap C| > 2k^2$ and $|Y \setminus C| > 2k^2$. We define $Y_1$ to be the smallest of the two sets $Y\cap C$ and $Y\setminus C$, and $Y_2$ to be any subset of the largest of the two sets such that $|Y_2|=|Y_1|$. Note that $2k^2+1\leq |Y_1|=|Y_2|\leq |Y|/2$. By the definition of a well-connected set, there are $|Y_1| \geq 2k^2+1$ vertex-disjoint paths with one endpoint in $Y \cap C$ and the other in $Y \setminus C$. At least $2k^2$ of these paths exist in $G - uv$, and each of those must contain an edge in $\delta_{G-uv}(C)$. Since each edge in $\delta_{G-uv}(C)$ is bad, it follows that $\phi$ has at least $2k^2$ bad edges, contradicting Observation~\ref{obs:badlimit}.

Now suppose for contradiction that $\phi$ does not have a good component $C^*$ with $|Y\setminus C^*|\leq 2k^2$. Then $|Y\cap C|\leq 2k^2$ for every good component $C$ of $\phi$, as we showed earlier. Since $\phi$ has at most $t=|T_1|+|T_2|$ good components as a result of Observation~\ref{obs:containterm}, at most $t\, 2k^2$ vertices of $Y$ appear in good components. The fact that the size of $Y$ is much larger than $t\, 2k^2$, together with the observation that every vertex of $G$ appears in a good component by definition, yields the desired contradiction. Hence we know that $\phi$ has a good component $C^*$ with $|Y\setminus C^*|\leq 2k^2$. The uniqueness of $C^*$ follows from the sizes of $Y$ and $C^*$, and the fact that the good components of $\phi$ are pairwise disjoint.
\qed
\end{proof}

There are two problems with how to exploit the knowledge obtained from Lemma~\ref{lem:boundisect}. The first is that, even though we know that almost all the vertices of $Y$ appear in the same good component together, we do not know exactly which ones do. The second problem is that we are looking for an {\em edge} with both endpoints in the same good component, and $Y$ could be an independent set and thus not immediately give us an edge to delete. We deal with both problems by employing the very useful notion of {\em important sets}. For two vertices $x,y \in V(G)$, we say that a set $X\subseteq V(G)$ is $(x,y)$-{\em important} if it satisfies the following three properties: (1) $x \in X$ and $y \notin X$; (2) $G[X]$ is connected; and (3) there is no $X'\supset X$, $y\not\in X'$ such that $d_G(X')\le d_G(X)$ and $G[X']$ is connected. The following theorem was first proved in~\cite{chenll07}. We use here the formulation in~\cite{ML11}, because that one best fits the purposes of this paper.

\begin{theorem}[\cite{chenll07,ML11}]
\label{thm:impsep}
Let $x,y$ be two vertices in a graph $G$. For every $p\ge 0$, there are at most $4^p$ $(x,y)$-important sets $X$ such that $d_G(X) \leq p$. Furthermore, these important sets can be enumerated in time $4^p\cdot n^{O(1)}$.
\end{theorem}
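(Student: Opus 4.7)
The plan is to prove the bound by a branching algorithm whose search tree has at most $4^p$ leaves, each of which contributes at most one $(x,y)$-important set of cut value at most $p$. The measure driving the recursion is $\mu = 2p - \lambda$, where $\lambda = \lambda_G(x,y)$ denotes the minimum size of an $(x,y)$ edge cut in $G$. If $\lambda > p$ there is nothing to enumerate, since $\delta_G(X)$ is an $(x,y)$-cut for every candidate $X$ and so $d_G(X) \ge \lambda > p$. Otherwise $\mu \ge 0$, and I will show that each branching step drops $\mu$ by at least $1$. Hence the depth is at most $2p$ and the number of leaves is at most $2^{2p} = 4^p$. The work at each node of the tree reduces to a single max-flow call, yielding the claimed $4^p \cdot n^{O(1)}$ total time.

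The heart of the proof is a structural lemma that anchors every important set to a canonical subset of $G$. Let $R^*$ denote the unique inclusion-maximal set with $x \in R^*$, $y \notin R^*$, $G[R^*]$ connected, and $d_G(R^*) = \lambda$ (the ``rightmost'' minimum $(x,y)$-cut); both existence and uniqueness follow by applying submodularity of the edge-boundary function to two such sets, together with the fact that any set containing $x$ but not $y$ has $d_G$-value at least $\lambda$. The claim is that every $(x,y)$-important set $X$ satisfies $R^* \subseteq X$. Indeed, $X \cup R^*$ contains $x$, avoids $y$, and is connected (both $X$ and $R^*$ are connected and share the vertex $x$), and submodularity gives
\[
 d_G(X \cup R^*) + d_G(X \cap R^*) \le d_G(X) + d_G(R^*).
\]
Since $X \cap R^*$ is itself an $(x,y)$-separator its cut value is at least $\lambda = d_G(R^*)$, so $d_G(X \cup R^*) \le d_G(X)$. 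The maximality condition in the definition of an important set now forces $X \cup R^* = X$, i.e.\ $R^* \subseteq X$. An essentially identical calculation yields a companion statement needed below: the source side of \emph{every} minimum $(x,y)$-cut is contained in $R^*$.

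Given $R^*$, pick any edge $e = uv \in \delta_G(R^*)$ with $u \in R^*$ and $v \notin R^*$. (If $\delta_G(R^*) = \emptyset$ then $\lambda = 0$, and $R^*$ itself is the unique important set consistent with $p \ge 0$; output it and return.) Because $R^* \subseteq X$ we already know $u \in X$, so we branch on whether $v \in X$. In the branch ``$v \notin X$'' the edge $e$ lies in $\delta_G(X)$, and we recurse on $(G - e, x, y, p-1)$; the minimum cut drops by at most $1$ under a single edge deletion, so $\mu$ decreases by at least $2 - 1 = 1$. In the branch ``$v \in X$'' we recurse on the modified instance in which $v$ is required to belong to $X$ (formally, maintain an auxiliary set $T$ of forced source vertices, initially $\{x\}$, and replace $T$ by $T \cup \{v\}$; the structural lemma and the branching go through verbatim with $T$ in place of $\{x\}$). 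Here $p$ is unchanged, but by the companion fact above no minimum $(x,y)$-cut of $G$ places $v$ on the source side, so forcing $v$ there strictly raises the min-cut value to at least $\lambda + 1$, and $\mu$ again drops by at least $1$.

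At each leaf, either the residual instance is infeasible or the residual set $R^*$ (pulled back through the forced source vertices) is the unique $(x,y)$-important set of $G$ consistent with the choices on the root-to-leaf path, and its importance can be checked in polynomial time. Because the branching partitions candidate important sets by membership of specific vertices, different leaves produce different sets, so summing over the at most $4^p$ leaves gives the claimed bound on the number of important sets and the $4^p \cdot n^{O(1)}$ enumeration time. The main obstacle in the argument is the submodular structural lemma together with its companion assertion about \emph{all} minimum cuts; the latter is what certifies the drop in $\mu$ in the ``$v \in X$'' branch, and without it the measure analysis collapses and the $4^p$ bound is lost.
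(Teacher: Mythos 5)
The paper does not prove Theorem~\ref{thm:impsep}; it cites it from \cite{chenll07,ML11}, so there is no in-paper proof to compare against. Your reconstruction is the standard argument for bounding important separators (submodular ``pushing'' lemma identifying the $y$-closest minimum cut $R^*$, branching on a boundary edge, measure $\mu = 2p - \lambda$), and it is essentially correct. Two details are worth tightening if you write this out fully: first, the ``companion'' claim that the source side of \emph{every} minimum cut lies inside your connected $R^*$ implicitly assumes you are working inside the connected component of $x$ (in a disconnected $G$ a minimum cut's source side can pick up whole stray components), and it must also be restated and re-proved for the generalized source set $T$ that grows during the recursion; second, the correctness claim at the leaves is cleaner if phrased as ``every $(x,y)$-important set of $G$ with $d_G(X)\le p$ determines a unique root-to-leaf path and is recovered as $R^*_{T'}$ at that leaf'' rather than ``different leaves produce different sets,'' since a leaf may also produce a candidate that is \emph{not} important in $G$ and must be filtered. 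Neither of these affects the $4^p$ bound or the $4^p\cdot n^{O(1)}$ running time.
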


Suppose $G - uv$ has a cheapest $(T_1,T_2)$-extension $\phi$ of cost at most $k$. We will use the important sets together with Lemma~\ref{lem:boundisect} to identify vertices in $Y$ which must be in the unique good component $C^*$ of $\phi$ that contains all but at most $2k^2$ vertices of $Y$. We first build a graph $G^*$ from $G$ by adding a new vertex $y^*$ and making $y^*$ adjacent to all vertices in $Y$. We then enumerate all $x \in T_1 \cup T_2$ and all $(x,y^*)$-important sets $X$ in $G^*$ such that $d_{G^*}(X) \leq 4k^2$. By Theorem~\ref{thm:impsep}, this can be done in time $4^{4k^2} n^{O(1)}$ for each choice of $x$. Finally, we define the set $Z$ to be the union of all enumerated sets $X$. In other words,
\begin{eqnarray*}
Z & &= \{ w \in V(G^*)~:~\\
& & \exists x\in T_1 \cup T_2, X \subseteq V(G^*), w \in X, d_{G^*}(X) \leq 4k^2 \mbox{ and } X \mbox{ is } (x,y^*)\mbox{-important}\}
\end{eqnarray*}
Observe that, given $G$ and $Y$, $Z$ can be computed in time $t \, 4^{4k^2} n^{O(1)}$. We will use the set $Z$ in the following way. First we show that if there is an edge $uv \in E(G)$ such that neither $u$ nor $v$ are in $Z$, then the edge $uv$ is irrelevant. Then we show that such an edge always exists. 

\begin{lemma}
\label{uv-irrelevant}
Let $uv \in E(G)$ such that $u \notin Z$ and $v \notin Z$. Then $uv$ is irrelevant. 
\end{lemma}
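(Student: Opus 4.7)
The plan is to prove the two directions of the equivalence separately. The forward direction is immediate from Observation~\ref{obs:remedge}: any $(T_1,T_2)$-extension $\phi$ of $G$ of cost at most $k$ is also a $(T_1,T_2)$-extension of $G-uv$ of cost at most $k$. For the nontrivial direction, I would assume that $G-uv$ admits a $(T_1,T_2)$-extension of cost at most $k$, let $\phi$ be a cheapest such extension, and then, by Observation~\ref{obs:irredge}, reduce the whole lemma to showing that $u$ and $v$ lie in the same good component of $\phi$.

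Assume, towards a contradiction, that they do not, and let $C^*$ be the unique good component of $\phi$ with $|Y\setminus C^*|\le 2k^2$ supplied by Lemma~\ref{lem:boundisect}. Then at least one of $u,v$ lies outside $C^*$; without loss of generality $u\notin C^*$, and let $C_u$ denote the good component of $\phi$ containing $u$, so $C_u\neq C^*$. The strategy from here is to exhibit an $(x,y^*)$-important set in $G^*$ that contains $u$ and has $G^*$-boundary of size at most $4k^2$, which would force $u\in Z$ and contradict the hypothesis.

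By Observation~\ref{obs:containterm}, $C_u$ contains some $x\in T_1\cup T_2$. Every edge of $\delta_{G-uv}(C_u)$ is bad with respect to $\phi$, so Observation~\ref{obs:badlimit} gives $|\delta_{G-uv}(C_u)|<2k^2$; the only $G$-edge possibly absent from $\delta_{G-uv}(C_u)$ is $uv$ itself (since $v\notin C_u$), hence $|\delta_G(C_u)|\le 2k^2$. Since $C_u\neq C^*$, Lemma~\ref{lem:boundisect} also gives $|Y\cap C_u|\le 2k^2$, and in $G^*$ the cut $\delta_{G^*}(C_u)$ equals $\delta_G(C_u)$ together with one edge $y^*w$ for each $w\in Y\cap C_u$. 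Therefore $d_{G^*}(C_u)\le 4k^2$.

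Finally, $C_u$ is connected in $G^*$, contains $x$, and avoids $y^*$, so starting from $C_u$ I would iteratively replace the current set $X$ by any strict superset $X'\supsetneq X$ with $G^*[X']$ connected, $y^*\notin X'$, and $d_{G^*}(X')\le d_{G^*}(X)$, until no such enlargement exists. This process terminates at an $(x,y^*)$-important set $X\supseteq C_u$ with $u\in X$ and $d_{G^*}(X)\le 4k^2$, so $u\in Z$ by the definition of $Z$, contradicting the assumption $u\notin Z$. I expect the only real obstacle to be the edge count $d_{G^*}(C_u)\le 4k^2$, which requires combining three ingredients (the bad-edge bound from Observation~\ref{obs:badlimit}, the extra edge $uv$ in $G$, and the bound on $|Y\cap C_u|$ from Lemma~\ref{lem:boundisect}); once this bound is in place, the important-set framework from Theorem~\ref{thm:impsep} closes the argument.
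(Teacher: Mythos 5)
Your proof is correct and follows essentially the same route as the paper: take a cheapest $(T_1,T_2)$-extension $\phi$ of $G-uv$, invoke Lemma~\ref{lem:boundisect} to get the distinguished good component $C^*$, show that any good component other than $C^*$ containing one of $u,v$ would yield an $(x,y^*)$-important set of small boundary in $G^*$ covering that vertex (using Observations~\ref{obs:badlimit} and~\ref{obs:containterm} to bound $d_{G^*}$ and supply the terminal $x$), forcing membership in $Z$ and a contradiction, and then conclude via Observation~\ref{obs:irredge}. The only cosmetic differences are that you spell out the trivial direction of the ``irrelevant'' equivalence separately and you make explicit the greedy enlargement argument for why an important set containing $C_u$ exists, which the paper simply asserts; these are harmless elaborations, not a different approach.
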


\begin{proof}
Let $\phi$ be a cheapest $(T_1,T_2)$-extension of $G - uv$ with cost at most $k$. Let $C^*$ be a good component of $\phi$ such that $|Y\setminus C^*| \leq 2k^2$. By Lemma~\ref{lem:boundisect}, such a component $C^*$ exists, and every other good component $C$ of $\phi$ satisfies $|Y \cap C| \leq 2k^2$. We prove that both $u$ and $v$ are in $C^*$. Suppose $u \notin C^*$. Then $u \in C$ for some other good component of $\phi$. Now, $C$ induces a connected subgraph in $G - uv$, and all edges leaving $C$ in $G - uv$ are bad with respect to $\phi$. Since $\phi$ has less than $2k^2$ bad edges by Observation~\ref{obs:badlimit}, it follows that $d_{G-uv}(C) < 2k^2$, and thus $d_G(C) \leq 2k^2$. Furthermore, because $|Y \cap C| \leq 2k^2$, we have that $d_{G^*}(C) \leq 4k^2$. Finally, by Observation~\ref{obs:containterm}, $C$ must contain a vertex $x \in T_1 \cup T_2$. Hence there must be a $(x,y^*)$-important set $X$ such that $C \subseteq X$ and $d(X) \leq 4k^2$ in $G^*$. But $C \subseteq X \subseteq Z$, which implies that $u \in Z$, contradicting the assumption that $u\notin Z$. The proof that $v \in C^*$ is identical. We conclude that both $u$ and $v$ are in $C^*$, and hence, by Observation~\ref{obs:irredge}, $uv$ is irrelevant. 
\qed
\end{proof}

\begin{lemma}
\label{alwaysirrelevant} 
$G$ contains an edge $uv$ such that $u \notin Z$ and $v \notin Z$. 
\end{lemma}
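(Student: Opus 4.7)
The plan is to argue by contradiction: assume $V(G) \setminus Z$ is an independent set in $G$, so that every edge of $G$ has at least one endpoint in $Z$, and derive a contradiction from the size of the well-connected set $Y$ combined with the boundary bound $4k^2$ baked into the construction of $Z$.

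The first ingredient is an upper bound on $|Y \cap Z|$. For every enumerated $(x, y^*)$-important set $X$, each $y \in Y \cap X$ is adjacent in $G^*$ to $y^* \notin X$, so $|Y \cap X| \leq d_{G^*}(X) \leq 4k^2$. By Theorem~\ref{thm:impsep} there are at most $t \cdot 4^{4k^2}$ enumerated important sets, giving $|Y \cap Z| \leq (4k^2)\, t\, 4^{4k^2}$, and hence $|Y \setminus Z| \geq (4k^2)\, t\, 4^{4k^2} + 2 > |Y|/2$.

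Setting $p = \lfloor |Y|/2 \rfloor$, I would pick $A \subseteq Y \setminus Z$ with $|A| = p$ and $B \subseteq Y \setminus A$ with $|B| = p$, and invoke well-connectivity of $Y$ to obtain $p$ pairwise vertex-disjoint $A$--$B$ paths in $G$. Denoting by $a_i$ the $A$-endpoint of the $i$-th path and by $z_i$ its neighbour on that path, the contradictory assumption forces $z_i \in Z$ (otherwise $a_i z_i$ would be an edge inside $V(G) \setminus Z$), and vertex-disjointness ensures the $p$ edges $a_i z_i$ are distinct.

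The contradiction then comes from a double-counting step. For every $i$, fix any enumerated important set $X_i$ containing $z_i$; since $a_i \notin Z \supseteq X_i$, the edge $a_i z_i$ lies in $\delta_{G^*}(X_i)$. Summing $|\delta_{G^*}(X)| \leq 4k^2$ over the at most $t \cdot 4^{4k^2}$ important sets leaves at most $(4k^2)\, t\, 4^{4k^2}$ such boundary slots, contradicting $p \geq (4k^2)\, t\, 4^{4k^2} + 1$. The delicate point is that the single parameter $4k^2$ in the definition of $Z$ is used twice -- once to bound $|Y \cap X|$ and once to bound $|\delta_{G^*}(X)|$ -- and the lower bound on $|Y|$ is calibrated precisely so that after subtracting the first contribution, the second still leaves a pigeonhole gap.
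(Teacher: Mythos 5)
Your proof is correct and follows essentially the same route as the paper's: both bound $|Y \cap Z|$ via the important-set boundaries, route vertex-disjoint paths within $Y$ starting from vertices outside $Z$, and observe that there are too few boundary edges to absorb the first edges of all these paths (you phrase this as a double count over the individual important sets, while the paper first aggregates into the bound $d_G(Z) \le (4k^2)\,t\,4^{4k^2}$; the two calculations are interchangeable). One small slip: the chained inequality $|Y\setminus Z|\ge (4k^2)\,t\,4^{4k^2}+2 > |Y|/2$ need not hold in that order once $|Y|$ exceeds its lower bound, but the fact you actually use, namely $|Y\setminus Z|\ge\lfloor|Y|/2\rfloor$, follows directly from $|Y\cap Z|\le(4k^2)\,t\,4^{4k^2}<|Y|/2$.
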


\begin{proof}
We first prove that $d_G(Z) \leq (4k^2) \, t \,4^{4k^2}$ and $|Z \cap Y| \leq (4k^2) \, t \, 4^{4k^2}$. For the first inequality, it suffices to show that $d_{G^*}(Z) \leq (4k^2)\, t\, 4^{4k^2}$, because $G$ is a subgraph of $G^*$. However, $Z$ is the union of at most $t \, 4^{4k^2}$ important sets $X$, where $d_{G^*}(X) \leq 4k^2$ for each set. Hence the first inequality follows. To see that $|Z \cap Y| \leq (4k^2)\, t\, 4^{4k^2}$, observe that each $(x,y^*)$-important set $X$ in $G^*$ with $d_{G^*}(X) \leq 4k^2$ contains at most $4k^2$ vertices of $Y$, since each vertex in $Y$ is a neighbour of $y^*$.

In order to prove that $G$ contains an edge $uv$ with $u\notin Z$ and $v\notin Z$, we arbitrarily partition $Y$ into $Y_1$ and $Y_2$ such that $|Y_1|=|Y_2|$ and $Z \cap Y \subseteq Y_2$. Since $|Y|\geq 2(4k^2) \, t \,4^{4k^2}+2$ by assumption and we showed that $|Z \cap Y|\leq (4k^2)\, t\, 4^{4k^2}$, such a partition always exists. By the definition of a well-connected set, there are $|Y_1|=(4k^2)\, t\, 4^{4k^2}+1$ vertex-disjoint paths starting in $Y_1$ and ending $Y_2$. For every $i \leq|Y_1|$, let $u_iv_i$ be the first edge on the $i$th such path, with $u_i \in Y_1$. Recall that $Z \cap Y \subseteq Y_2$ by assumption. Since all of the $u_i$'s are in $Y_1$, none of them are in $Z$. Thus, each $v_i$ that belongs to $Z$ contributes one to $d_G(Z)$, as then $u_iv_i\in \delta_G(Z)$. Since we bounded $d_G(Z)$ from above by $(4k^2) \, t\, 4^{4k^2}$ at the start of this proof, not every $v_i$ can belong to $Z$. Hence there is an edge $u_iv_i$ with neither endpoint in $Z$.
\qed
\end{proof}

We are now ready to state the main lemma of this section.

\begin{lemma}
\label{lem:extension}
{\sc Cheap Coloring Extension} can be solved in time $f(k,t) \, n^{O(1)}$.
\end{lemma}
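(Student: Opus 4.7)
The plan is to assemble the pieces developed in this section into a single recursive algorithm whose recursion strictly decreases $|E(G)|$. Given an instance $(G,k,t,T_1,T_2)$, I would first invoke the constructive version of Theorem~\ref{thm:twlinked} from~\cite{DGJT99} with $w = 3(4k^2)\,t\,4^{4k^2}+3$. This returns, in time $c^{w}\,n^{O(1)} = g(k,t)\,n^{O(1)}$, either a tree decomposition of $G$ of width at most $w$, or a well-connected set $Y$ of size at least $2(4k^2)\,t\,4^{4k^2}+2$.

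In the first case the algorithm terminates by feeding the decomposition into Lemma~\ref{lem:solvetwlinked}, which answers the instance in $f(k,t,w)\,n = h(k,t)\,n$ time. In the second case, I would build the auxiliary graph $G^*$ by adding a fresh vertex $y^*$ adjacent to every vertex of $Y$, and for each $x \in T_1 \cup T_2$ enumerate, via Theorem~\ref{thm:impsep}, all $(x,y^*)$-important sets $X$ with $d_{G^*}(X)\le 4k^2$. Taking the union yields the set $Z$ in time $t\,4^{4k^2}\,n^{O(1)}$. Lemma~\ref{alwaysirrelevant} certifies that there is an edge $uv \in E(G)$ with $u,v \notin Z$, which a trivial scan locates in polynomial time, and Lemma~\ref{uv-irrelevant} guarantees that $uv$ is irrelevant. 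I would then recurse on $(G-uv,k,t,T_1,T_2)$, noting that $G-uv$ is still bipartite (edge deletion cannot create an odd cycle), and that the parameters $k$, $t$, $T_1$, $T_2$ are unchanged.

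Correctness follows by a straightforward induction on $|E(G)|$: irrelevance of $uv$ means the instance and its reduced version have the same answer, and the base case where small treewidth is detected is handled exactly by Lemma~\ref{lem:solvetwlinked}. For the running time, each recursive call deletes an edge, so there are at most $m = O(n^2)$ iterations in total, and the small-treewidth branch is invoked at most once. Each iteration costs $g(k,t)\,n^{O(1)}$ for the Diestel et al. subroutine plus $t\,4^{4k^2}\,n^{O(1)}$ for enumerating important sets and scanning for a good edge; the final terminal call adds $h(k,t)\,n$. Summing gives a total bound of the required form $f(k,t)\,n^{O(1)}$.

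The main obstacle is really just the bookkeeping: one has to be careful that re-running the (parameter-dependent) tree-width/well-connectedness routine at every iteration is absorbed into $n^{O(1)}$, and that the constants hidden in $g(k,t)$ and in the exponent of $4^{4k^2}$ match the sizes of $Y$ chosen in the statement of Section~\ref{sec:important}, so that Lemmas~\ref{lem:boundisect}, \ref{uv-irrelevant}, and \ref{alwaysirrelevant} apply unchanged at every recursive step. Nothing deeper is required beyond gluing the existing lemmas together.
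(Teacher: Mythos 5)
Your proposal is correct and follows essentially the same route as the paper: invoke the constructive Diestel et al.\ result, dispatch the small-treewidth case via Lemma~\ref{lem:solvetwlinked}, otherwise compute $Z$ from important sets, locate and delete an irrelevant edge guaranteed by Lemmas~\ref{uv-irrelevant} and~\ref{alwaysirrelevant}, and recurse with the edge count as the decreasing measure. The only small divergences are cosmetic and both fine: you find the edge by a direct scan over $E(G)$ rather than re-tracing the path construction from the proof of Lemma~\ref{alwaysirrelevant}, and you explicitly note that $G-uv$ stays bipartite, a needed invariant the paper leaves implicit.
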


\begin{proof}
Let $(G,k,t,T_1,T_2)$ be an instance of {\sc Cheap Coloring Extension}, and let $f$ be an appropriate function that does not depend on $n$. We first apply Theorem~\ref{thm:twlinked} and the remark immediately following it to compute, in time $f(k,t) \, n^{O(1)}$, either a tree-decomposition of $G$ of width at most $3(4k^2)\, t \, 4^{4k^2} + 3$ or a well-connected set $Y$ of size at least $2(4k^2) \, t \,4^{4k^2}+2$. If we get a tree-decomposition of small width, we apply Lemma~\ref{lem:solvetwlinked} to solve the problem in additional time $f(k,t) \, n$, after which we terminate and output the answer. If we find a well-connected set $Y$, we continue to find an irrelevant edge $e \in E(G)$ and delete it from $G$. Lemmas~\ref{uv-irrelevant} and~\ref{alwaysirrelevant} guarantee that such an edge always exists. In order to find $e$, we first compute the set $Z$. We already argued that this can be done in time $f(k,t) \, n^{O(1)}$. We can find an irrelevant edge as explained in the proof of Lemma~\ref{alwaysirrelevant} in additional polynomial time, since this amounts to computing $Z \cap Y$, choosing $Y_2$ to contain the whole intersection and as many more vertices as needed to obtain $|Y_1|=|Y_2|$, and checking all edges leaving $Y_1$ to find one whose endpoints do not belong to $Z$. The total running time of this whole procedure is clearly $f(k,t) \, n^{O(1)}$.

After an irrelevant edge $e$ is deleted from $G$, we run the whole procedure on $(G-e,k,t,T_1,T_2)$. This can be repeated at most $|E(G)|=n^{O(1)}$ times, and hence the total running time $f(k,t) \, n^{O(1)}$ follows.
\qed
\end{proof}

Our main result immediately follows from Lemmas~\ref{lem:contrcolor},~\ref{lem:cheaptocheaper},~\ref{lem:redextension}, and~\ref{lem:extension}.

\begin{theorem}
{\sc Bipartite Contraction} is fixed parameter tractable when parameterized by $k$.
\end{theorem}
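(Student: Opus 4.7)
The plan is to obtain the theorem as a direct composition of the four preceding lemmas, which together form a chain of reductions from \textsc{Bipartite Contraction} down to a problem that has already been shown to be fixed parameter tractable. Concretely, I would take an arbitrary instance $(G,k)$ of \textsc{Bipartite Contraction} and push it through the following pipeline.

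First I would invoke Lemma~\ref{lem:contrcolor} to reinterpret $(G,k)$ as the equivalent instance $(G,k)$ of \textsc{Cheap Coloring}: a contracting set $S$ of at most $k$ edges yielding a bipartite graph exists if and only if $G$ admits a 2-coloring of cost at most $k$. Next, Lemma~\ref{lem:cheaptocheaper} shows that an algorithm for \textsc{Cheaper Coloring} running in time $f(k)\, n^c$ yields an algorithm for \textsc{Cheap Coloring} in time $f(k)\, n^c m$ by the iterative-compression scheme built on Observation~\ref{obs:remedge}. So it suffices to solve \textsc{Cheaper Coloring} in FPT time. For that, Lemma~\ref{lem:redextension} reduces a single instance of \textsc{Cheaper Coloring} to at most $4^{k+1}$ instances of \textsc{Cheap Coloring Extension}, each with parameters $k'\le k$ and $t=|X_1|+|X_2|\le 2k+2$. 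Finally, Lemma~\ref{lem:extension} provides an $f(k,t)\, n^{O(1)}$ algorithm for \textsc{Cheap Coloring Extension}, so the whole pipeline runs in time $m\cdot 4^{k+1}\cdot f(k,2k+2)\cdot n^{O(1)}$, which is of the form $g(k)\cdot n^{O(1)}$ for some computable function $g$ depending only on $k$.

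The hard part has in fact already been discharged before we reach this statement: all of the conceptual content, namely the novel combination of the irrelevant-vertex technique with important separators on a well-connected set $Y$ and the Courcelle-based fallback for the small-treewidth case, lives inside the proof of Lemma~\ref{lem:extension}. At the level of the theorem itself, the only thing I need to verify is that the parameter degradations along the chain remain functions of $k$ alone (they do: the parameter $t$ introduced in Lemma~\ref{lem:redextension} is bounded by $2k+2$, so $f(k,2k+2)$ is still a function of $k$), and that each reduction preserves yes/no-instances. Putting these observations together gives the claimed FPT algorithm for \textsc{Bipartite Contraction} parameterized by $k$.
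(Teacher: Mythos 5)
Your proposal is correct and follows exactly the paper's own route: the theorem is stated as an immediate consequence of Lemmas~\ref{lem:contrcolor}, \ref{lem:cheaptocheaper}, \ref{lem:redextension}, and \ref{lem:extension}, composed in precisely the order you describe. Your explicit bookkeeping that $t\le 2k+2$ keeps the parameter dependence a function of $k$ alone is a fair and accurate elaboration of what the paper leaves implicit.
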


We end this section with a remark on the running time. If we use Lemma \ref{lem:solvetwlinked} in the proof of Lemma \ref{lem:extension}, then the parameter dependence of the whole algorithm is dominated by a very large function in $k$ \cite{Courcelle90}. However, as we remarked after Lemma \ref{lem:solvetwlinked}, we can obtain a running time of $(4^{O(k^2)})^{4^{O(k^2)}} n= 2^{2^{O(k^2)}} n$ for the small treewidth case of the proof of Lemma \ref{lem:extension}. This is because the treewidth of the instance at hand is $4^{4k^2} k^{O(1)} = 4^{4k^2 O(\log k)}=4^{O(k^2)}$ when the small treewidth case applies. This gives a total running time of of $2^{2^{O(k^2)}} n^{O(1)}$ for our algorithm for {\sc Bipartite Contraction}.

\section{Concluding remarks}

For completeness, we would like to mention that {\sc Bipartite Contraction} is NP-complete; a polynomial time reduction from {\sc Edge Bipartization} can be obtained by replacing every edge of the input graph by a path of sufficiently large odd length. A highly relevant question is whether {\sc Bipartite Contraction} admits a polynomial kernel, meaning that there is a polynomial time algorithm that transforms an instance $(G,k)$ to an instance $(G',k')$ of size $g(k)$, where $g$ is a polynomial in $k$. The mentioned NP-completeness reduction from  {\sc Edge Bipartization} is parameter preserving, as $k$ remains the same. However, it is not known whether or not {\sc Edge Bipartization} admits a polynomial kernel.

\begin{footnotesize}

\end{footnotesize}

\end{document}